\newtheorem{theorem}{Theorem}
\newtheorem{corollary}[theorem]{Corollary}
\newtheorem{definition}[theorem]{Definition}
\newtheorem{problem}[theorem]{Problem}
\title{\textbf{Balancing the Payment System}}
\author[1]{Tomaž Fleischman}
\author[2]{Paolo Dini}
\affil[1]{Be Solutions d.o.o.,
Bleiweisova cesta 30, 1000 Ljubljana, Slovenia\authorcr
e-mail: tomaz.fleischman@be-solutions.si
\vspace{.2cm}}
\affil[2]{London School of Economics and Political Science, UK\authorcr
e-mail: p.dini@lse.ac.uk}
\date{October 2020}
\begin{document}

\maketitle

\begin{abstract}
\noindent
The increasingly complex economic and financial environment in which we live makes the management of liquidity in payment systems and the economy in general a persistent challenge. New technologies are making it possible to address this challenge through alternative solutions that complement and strengthen existing payment systems. For example, the interbank balancing method can also be applied to private payment systems, complementary currencies, and trade credit clearing systems to provide better liquidity and risk management. In this paper we introduce the concept of a balanced payment system and demonstrate the effects of balancing on a small example. We show how to construct a balanced payment subsystem that can be settled in full and, therefore, that can be removed from the payment system to achieve liquidity-saving and payments gridlock resolution.
We also briefly introduce a generalization of a payment system and of the method to balance it in the form of a specific application (Tetris Core Technologies), whose wider adoption could contribute to the financial stability of and better management of liquidity and risk for the whole economy.
\end{abstract}

\section{Introduction}

A \emph{payment system} consists of two main parts, the obligation network and the liquidity source/sink, which is used to facilitate the discharge of obligations in the obligation network. A \emph{balanced payment system} is a payment system in which all obligations can be discharged simultaneously. This is possible when the total inflow of cash equals the total outflow of cash for every part of the system -- meaning that the system satisfies the flow conservation condition, such that the conclusion that this clears all obligations in the system is trivial.

Constructing a balanced system has practical value. For example, subtracting a balanced subsystem does not disrupt the balance of the remaining payment system. This means that the subtraction of a balanced subsystem will decrease the total debt in a payment system without disturbing the relative liquidity positions between the remaining counterparties with non-zero net positions. The key to constructing a balanced system at any one time is a centralized knowledge of the obligations that are present at that time between the members of the obligation network. This centralised knowledge allows to maximise the amount of mutual indebtedness that can be taken out of the obligation network. The benefit of membership is liquidity-saving for the participating members and a decreased systemic risk for the members and the wider economy.

An obligation network can be viewed as a set of payments due. Payments reflect the complex and highly interconnected supply networks and form a dense strongly-connected obligation network. `Strongly connected' means that there is a path of payments or invoices in each direction connecting any pair of firms.\footnote{Such a path usually involves multiple, and different, firms in each direction.} If the obligation network is not strongly connected then it can usually be split into just a few strongly-connected parts or ``clusters''. A consequence of this definition is that all the firms in a strongly-connected network are part of at least one cycle. Although this sounds encouraging, depending on the distribution of liquidity over the payment system members we can observe situations where payments cannot be processed individually. Leinonen \cite{leinonen2005liquidity} provides the following definitions for different possible liquidity distributions:
\begin{itemize}
    \item \textbf{Circular} - is a situation where individual payments can only be settled in a specific order. This situation is resolvable by reordering the payment queue.
    \item \textbf{Gridlock} - is a situation in which several payments cannot be settled individually but can be settled simultaneously. This situation is resolvable with multilateral off-set.
    \item \textbf{Deadlock} - is a situation where the individual payments can be made only by adding liquidity to at least one of the system participants.
\end{itemize}

The methods used to resolve these situations are called Liquidity-Saving Mechanisms (LSMs). The benefits of LSMs in interbank payment systems are well described and demonstrated in \cite{RePEc:bof:bofrdp:2001_009} on a set of real data. An LSM applied to a payment system shortens the queues and reduces the need for additional liquidity to discharge the obligations.

The use of LSMs in interbank payment systems is widespread, but the benefits of liquidity-saving do not reach everyone. In particular, small companies with limited access to liquidity often use various alternative ways to discharge their obligations on the trade credit market or through the use of complementary currencies. This brings up the question of systems interdependence and risk of liquidity problems spilling from one system to another. Foote \cite{ASYMandSPILL} shows that the use of an LSM in one system reduces this risk in all systems.

Similarly to a payment system, a \emph{balanced obligation network} is a network that can discharge all obligations simultaneously. The most important task of any LSM is to find such a network. To manage the risks in the context of the increasing complexity of payment systems, the concept of a balanced network should be applied to as many payment and clearing systems as possible.

\section{Notation and Definitions}

We will use standard matrix and lattice algebra. The notation and basic definitions are based on work of Eisenberg and Noe \cite{SRinFN}. We use boldface to denote vector character and uppercase Latin letters for matrices and also for sets. $\mathcal{G}$ is reserved to indicate a graph, and $\mathcal{N} = \{1, 2, ... , n\} \subset \mathbb{N}$. For any two vectors $\mathbf{x}, \mathbf{y} \in \mathbb{R}^n$, define the lattice operations
\begin{align}\label{lattice-operations}
    \begin{split}
        \mathbf{x}^+ &:= (max[x_1,0], max[x_2,0],\ \cdots , max[x_n,0]) \\
        \mathbf{x}^- &:= (-x)^+ = (max[-x_1,0], max[-x_2,0],\ \cdots , max[-x_n,0]).
    \end{split}
\end{align}
Let $\mathbf{1}$ represent an $n$-dimensional vector all of whose components equal $\mathbf{1}$, i.e., $\mathbf{1} = (1,\ \cdots , 1)$. Similarly, $\mathbf{0}$ represents an $n$-dimensional vector all of whose components equal 0. Let $\left \| \cdot \right \| $ denote the $l^1$-norm on $n$. That is, 
\begin{equation}\label{l-norm}
    \left\| \mathbf{x} \right\| := \sum_{i=1}^{n} \left| x_i \right|.
\end{equation}
We use the following terms:
\begin{itemize}
    \item \textbf{Obligation network} - is a directed graph where the nodes\footnote{Usually referred to as vertices in graph theory.} represent firms and the edges represent the obligations. Parallel edges are allowed to represent multiple obligations between two firms.
    \item \textbf{Nominal liabilities matrix} - is a matrix representing total obligations or liabilities between firms. We will define special vectors to describe properties of the nominal liabilities matrix.
    \item \textbf{Payment system} - is constructed by adding special function nodes to the obligation network. These special nodes represent sources of funds and a store of value. They can have connections to all nodes in the obligation network, and the set of all connections for each special node will be described by a vector.
\end{itemize}

Let graph $\mathcal{G}$ represent the obligation network with $n$ nodes representing firms, $m$ edges representing obligations between firms, and the function $o(e)$ representing the value of a single obligation $e \in E$ between firm $v_i$ and firm $v_j$ (e.g.\ from a single invoice). The graph $\mathcal{G}$ may contain multiple edges from node $v_i$ to $v_j$. We use $(v_i,v_j) \subset E$ for the subset of $E$ that corresponds to all the edges between node $v_i$ and node $v_j$. These definitions are summarized formally as follows:
\begin{align}
    \mathcal{G} &= (V,E,s,t,o) 
        &&\mbox{directed graph of obligations between firms}
            \label{def1}\\
    V &= \{v_1, \cdots ,v_n\}
        &&\mbox{set of n nodes representing firms}
            \label{def2}\\
    E &= \{e_1, \cdots ,e_m\}
        &&\mbox{set of m edges representing individual obligations}
            \label{def3}\\
    e &\in E
        &&\mbox{individual edge from set $E$}
            \label{def4}\\
    s &\colon E \rightarrow V
        &&\mbox{assigns the source node to each edge}
            \label{def5}\\
    t &\colon E \rightarrow V
        &&\mbox{assigns the target node to each edge}
            \label{def6}\\
    o &\colon E \rightarrow \mathbb{R}
        &&\mbox{assigns the value of the obligation to each edge}
            \label{def7}
\end{align}

The \emph{nominal liability matrix} $L$ is a square $(n \times n)$ matrix each of whose entries is the sum of the obligations between two firms. Since companies do not invoice themselves $L$ has zeros on the diagonal. Each entry is given by
\begin{equation}\label{liability-matrix}
    L_{ij} = \sum _{e \in (v_i,v_j)} o(e).
\end{equation}
The sum of row $i$ of the nominal liability matrix represents the total debt of firm $i$ and the sum of column $j$ represents the total credit of firm $j$:
\begin{equation}\label{debt-and-credit}
    d_i = \sum _{j=1}^n L_{ij}, \qquad\qquad
    c_j = \sum _{i=1}^n L_{ij},
\end{equation}
where for the same firm $i = j$. Eqs.\ \eqref{debt-and-credit} provide the components of the system-wide \emph{credit vector} $\mathbf{c}$ and \emph{debt vector} $\mathbf{d}$. The difference between the credit and debt for each firm gives the obligation network's \emph{net position vector} $\mathbf{b}$:
\begin{align}\label{net-calculation}
    \begin{split}
        \mathbf{b}  &= \mathbf{c} - \mathbf{d} \\
                b_i &= c_i - d_i, \qquad
        d_i, c_i, b_i \in \mathbb{R},
        \quad i \in \mathcal{N}.
    \end{split}
\end{align}

\begin{definition}
    A vector $\mathbf{b}$ is called \emph{balanced} if the sum of its components equals 0:
    \begin{align}
        \sum_{i=1}^n b_i = 0.
    \end{align}
\end{definition}
\begin{theorem}
    Vector $\mathbf{b}$ representing the net positions of all firms is balanced.
\end{theorem}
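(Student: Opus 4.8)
The plan is to show that the sum of the components of $\mathbf{b}$ telescopes to zero by expanding each $b_i$ in terms of the entries of the nominal liability matrix $L$. First I would write
\begin{equation}
    \sum_{i=1}^n b_i = \sum_{i=1}^n (c_i - d_i) = \sum_{i=1}^n c_i - \sum_{i=1}^n d_i,
\end{equation}
using \eqref{net-calculation}. Then I would substitute the definitions of $c_i$ and $d_i$ from \eqref{debt-and-credit}, so that $\sum_{i=1}^n d_i = \sum_{i=1}^n \sum_{j=1}^n L_{ij}$ and $\sum_{j=1}^n c_j = \sum_{j=1}^n \sum_{i=1}^n L_{ij}$.

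The key observation is that both of these double sums range over exactly the same index set $\{(i,j) : 1 \le i \le n,\ 1 \le j \le n\}$ and sum the same quantity $L_{ij}$; they differ only in the order of summation, which is irrelevant for a finite sum. Hence
\begin{equation}
    \sum_{i=1}^n d_i = \sum_{i=1}^n \sum_{j=1}^n L_{ij} = \sum_{j=1}^n \sum_{i=1}^n L_{ij} = \sum_{j=1}^n c_j = \sum_{i=1}^n c_i,
\end{equation}
and therefore $\sum_{i=1}^n b_i = 0$, which by the preceding definition means $\mathbf{b}$ is balanced.

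There is essentially no hard part here; the only thing to be slightly careful about is the interchange of the two finite summations and the renaming of the dummy index (the column sum is defined with running index $i$ but can be relabelled freely). I would also note in passing that the diagonal entries $L_{ii} = 0$ play no role in the argument — the identity holds regardless — but mentioning it makes the structure transparent. If desired, one can phrase the whole thing as a single sentence: the total debt summed over all firms equals the grand total $\sum_{i,j} L_{ij}$ of all obligations, and so does the total credit, so their difference vanishes.
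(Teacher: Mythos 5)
Your proposal is correct and follows essentially the same route as the paper: expand $\sum_i b_i = \sum_i c_i - \sum_i d_i$, substitute the definitions from \eqref{debt-and-credit}, and observe that the two double sums over $L_{ij}$ coincide after interchanging the (finite) order of summation. The paper phrases this as each obligation contributing once as a credit and once as a debt, which is just the same index-set argument in words.
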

\begin{proof}
    Every obligation that forms the liability matrix contributes towards the net position exactly twice, once as a credit and once as a debt. The sum of all credits is therefore equal to the sum of all debts and the sum of all the net positions equals zero:
    \begin{align}
        \begin{split}
            \sum_{j=1}^n c_j = \sum_{j=1}^n \sum_{i=1}^n L_{ij}
                &= \sum_{i=1}^n \sum_{j=1}^n L_{ij}
                 = \sum_{i=1}^n d_i \\
            \sum_{i=1}^n b_i = \sum_{i=1}^n (c_i - d_i)
                 = \sum_{i=1}^n c_i - \sum_{i=1}^n d_i
                &= \sum_{j=1}^n c_j - \sum_{i=1}^n d_i = 0.
        \end{split}
    \end{align}
\end{proof}
\begin{corollary}
    As a consequence of vector $\mathbf{b}$ being balanced, the sum of its positive vector components must be equal to the sum of the absolute value of its negative vector components:
    \begin{align}
          \left\| \mathbf{b}^+ \right\|
        = \left\| \mathbf{b}^- \right\|,
    \end{align}
    where $\mathbf{b}^+$ and $\mathbf{b}^-$ are calculated as defined in Eqs.\ \eqref{lattice-operations}.
\end{corollary}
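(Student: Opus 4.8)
The plan is to reduce the claim to the elementary decomposition of a real number into its positive and negative parts. For any $x \in \mathbb{R}$ one has $x = \max[x,0] - \max[-x,0]$, and both of these quantities are nonnegative. Reading this componentwise through Eqs.\ \eqref{lattice-operations}, it says $b_i = b_i^+ - b_i^-$ for each $i \in \mathcal{N}$, where the entries of $\mathbf{b}^+$ and $\mathbf{b}^-$ are all $\geq 0$.

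First I would record that, because a vector with nonnegative entries has $l^1$-norm equal to the plain sum of its entries, Eq.\ \eqref{l-norm} gives $\left\| \mathbf{b}^+ \right\| = \sum_{i=1}^n b_i^+$ and $\left\| \mathbf{b}^- \right\| = \sum_{i=1}^n b_i^-$. Summing the componentwise identity $b_i = b_i^+ - b_i^-$ over $i \in \mathcal{N}$ then yields
\[
  \sum_{i=1}^n b_i \;=\; \sum_{i=1}^n b_i^+ - \sum_{i=1}^n b_i^- \;=\; \left\| \mathbf{b}^+ \right\| - \left\| \mathbf{b}^- \right\|.
\]

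Finally I would invoke that $\mathbf{b}$ is balanced — which is exactly the content of the preceding theorem — so that $\sum_{i=1}^n b_i = 0$. Substituting into the display gives $\left\| \mathbf{b}^+ \right\| - \left\| \mathbf{b}^- \right\| = 0$, which is the asserted equality.

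I do not expect a genuine obstacle; the only points needing care are the two bookkeeping facts just used — the identity $x = \max[x,0] - \max[-x,0]$ and the fact that the $l^1$-norm of a nonnegative vector is the sum of its components — both immediate from the definitions in Eqs.\ \eqref{lattice-operations} and \eqref{l-norm}. An alternative route would be to partition $\mathcal{N}$ into the indices with $b_i \geq 0$ and those with $b_i < 0$ and rearrange $\sum_i b_i = 0$ directly; this works equally well but is slightly less clean and does not reuse the paper's lattice notation.
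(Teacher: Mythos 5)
Your proof is correct and follows the same route the paper implicitly relies on: the paper offers no separate proof of the corollary but uses exactly your identity $\sum_{i=1}^n b_i = \left\| \mathbf{b}^+ \right\| - \left\| \mathbf{b}^- \right\|$ in its worked example, combined with the preceding theorem that the sum is zero. Nothing is missing; your explicit justification of the two bookkeeping facts only makes the argument more complete than the paper's.
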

A balanced net positions vector is important for the analysis of cashflow from external sources to the obligation network and vice versa. To visualize these definitions we use a small obligation network, as shown in Fig.\ \ref{fig:smalloblignet}, that consists of four nodes representing firms and arrows representing the individual obligations between them. The arrow labels represent the values of the obligations.

\begin{figure}[ht]
    \centering
    \includegraphics[width=8cm]{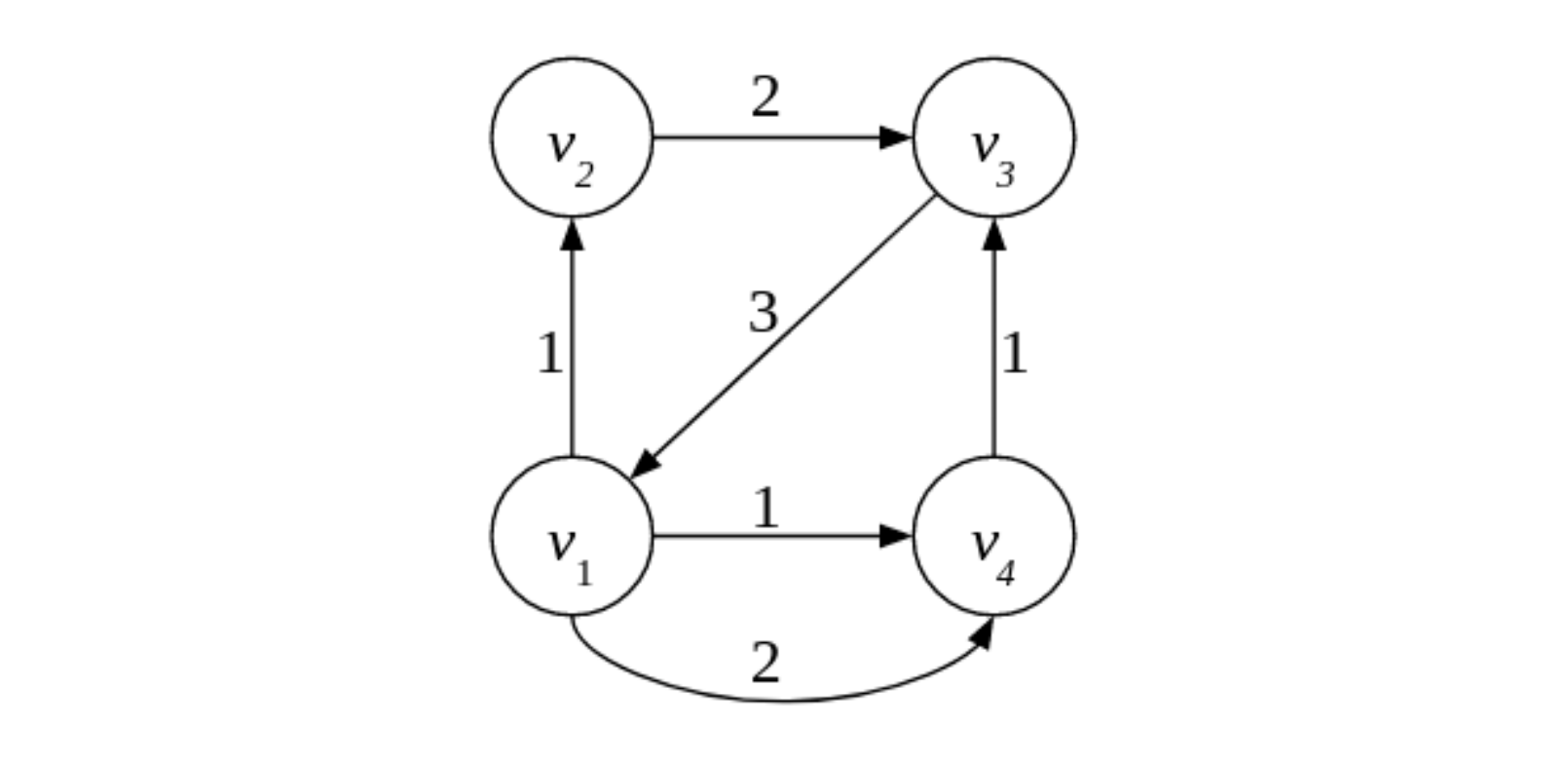}
    \caption{\small\textbf{Small obligation network}}
    \label{fig:smalloblignet}
\end{figure}
Eq.\ \ref{L matrix example} shows the corresponding nominal liabilities matrix $L$. Note that $L_{14}$ is the sum of the two obligations from Firm $1$ to Firm $4$. Eq.\ \ref{L matrix example} also shows the total credit and the total debt for each firm, as defined above.
\begin{align}\label{L matrix example}
    \begin{split}
        L = \begin{bmatrix} 0\; &1\;\,  &0\;\,  &3\,\\
                            0\; &0\;\,  &2\;\,  &0\,\\
                            3\; &0\;\,  &0\;\,  &0\,\\
                            0\; &0\;\,  &1\;\,  &0\,\end{bmatrix}&
                            \begin{array}{c}4\\2\\3\\1\end{array}
                            \begin{array}{c}d_1\\d_2\\d_3\\d_4\end{array}\\
        \begin{array}{cccc} 3\; &1\;    &3\;\,  &3\,\end{array}\;\\
        \begin{array}{cccc} c_1 &c_2    &c_3    &c_4\end{array}
    \end{split}
\end{align}
Vector $\mathbf{b}$ for this obligation network is calculated as
\begin{align}
    \begin{split}
        \mathbf{b} = \mathbf{c} - \mathbf{d} = (3,1,3,3)-(4,2,3,1)
            &= (-1,-1,0,2) \\
                \mathbf{b}^+ &= (0,0,0,2) \quad \Rightarrow \quad \left\| \mathbf{b}^+ \right\| = 2 \\
                \mathbf{b}^- &= (1,1,0,0) \quad \Rightarrow \quad \left\| \mathbf{b}^- \right\| = 2 \\
        \sum_{i=1}^n b_i = \left\| \mathbf{b}^+ \right\| - \left\| \mathbf{b}^- \right\| 
            &= 0.
    \end{split}
\end{align}

\section{Clearing All the Obligations in the Network}\label{ClearingAllObigations}

Our goal is to clear all the obligations in the obligation network. As shown in Fig.\ \ref{fig:oblignetfin}, to achieve this we introduce a special node $v_0$ that can act as a \emph{liquidity source} for all the cashflow towards the obligation network as well as a liquidity sink for all the cashflow from the obligation network. 

In practice $v_0$ can be a banking system where every firm in the network has a bank account. It can also be a complementary currency system or any other system with a store of value function. The cashflow is represented by an \emph{external cashflow vector} $\mathbf{f} \in \mathbb{R}^n$. When $f_i > 0$ the cashflow for firm $i$ is towards the obligation network, while when $f_i < 0$ its cashflow is from the network back to its bank account. By adding the cashflow vector we have created the payment system $(L,\mathbf{f})$.

\begin{figure}[ht]
    \centering
    \includegraphics[height=5cm]{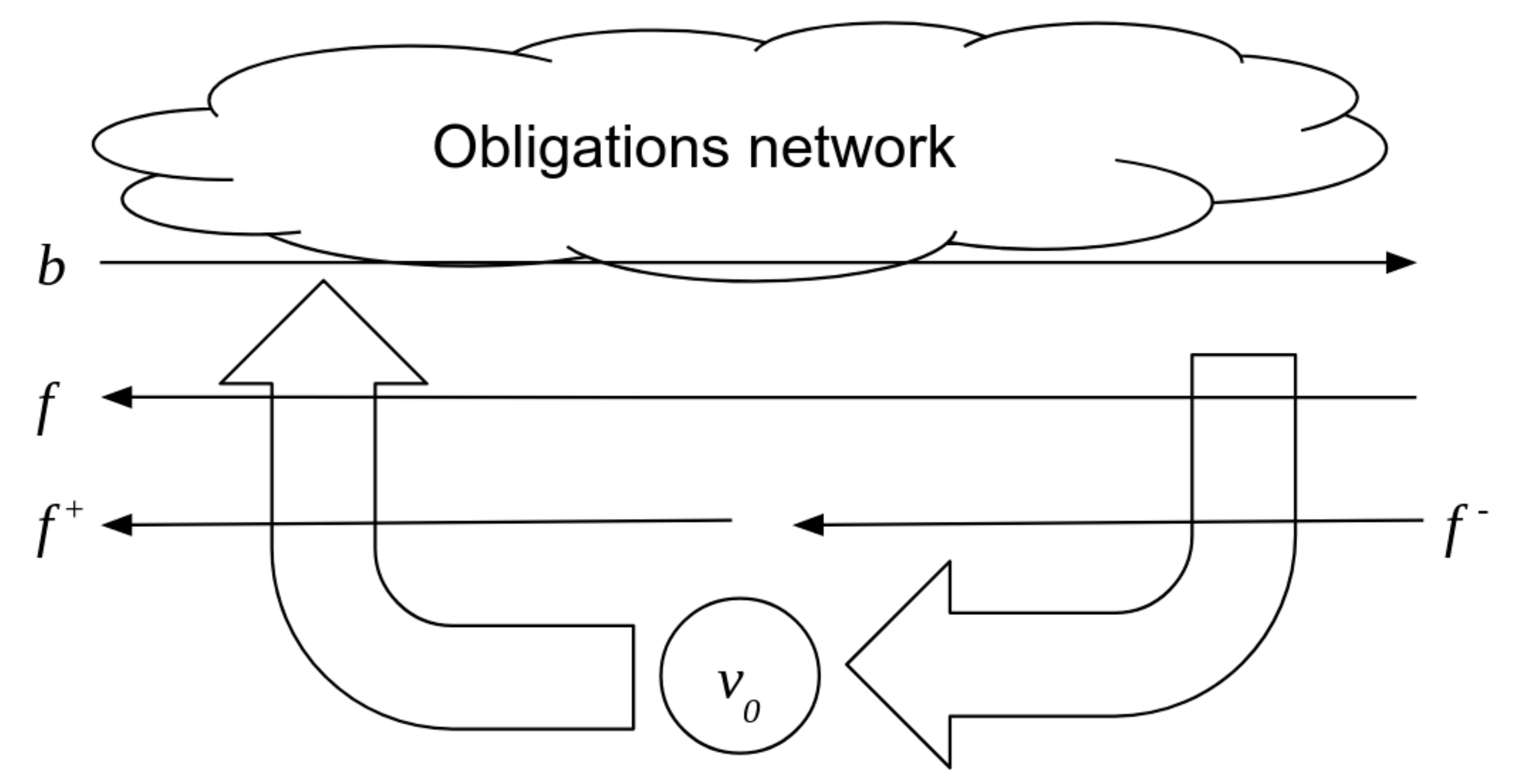}
    \caption{\small\textbf{Payment system: obligation network, source/sink of financing $v_0$, and vectors representing cashflows}}
\label{fig:oblignetfin}
\end{figure}

The cashflow available to firms from $v_0$ changes their net positions. If vector $\mathbf{b}$ represents the net positions of firms in the obligation network, let $\mathbf{b}^*$ represent the vector of firms' net positions in the payment system. The value of $\mathbf{b}^*$ is
\begin{equation}\label{b-star}
    \mathbf{b}^* = \mathbf{b} + \mathbf{f}.
\end{equation}
This simply states that the net position of every firm is increased by cashflow coming into the obligation network or decreased by cashflow going out of the obligation network. If our goal is to clear all the debt in the network, then -- assuming enough liquidity is available -- after our intervention the net position of every firm in the payment system has to be zero.  In such a scenario, the incoming cashflow is used to pay off the debts of all the firms with negative net positions, whereas the outgoing cashflow carries the cash into the bank accounts of the firms with positive (credit) net positions. Therefore,
\begin{equation}
\label{b-star-clear-all}
    \mathbf{b}^* = \mathbf{b} + \mathbf{f} = \mathbf{0} 
        \qquad \Rightarrow \qquad
            \mathbf{f} = -\mathbf{b}.
\end{equation}

Given an obligation network and the net positions $\mathbf{b}$ of its members, we now define:
\begin{definition}
\label{NIDdef}
    The \emph{Net Internal Debt} ($NID$) of the obligation network is the amount of cash needed by firms to discharge all the obligations in the network:
    \begin{equation}\label{NID}
        NID = \left|\left| \mathbf{b}^- \right|\right| .
    \end{equation}
\end{definition}
The payment system in Fig.\ \ref{fig:oblignetfin} relates to a real-life situation if we take that $v_0$ is a bank, complementary currency, or some other financial institution that can provide an account-holding function and/or that can serve as a source of liquidity. So the positive values of the cashflow vector $\mathbf{f}^+ = (-\mathbf{b})^+$ represent the payments from individual firms’ accounts at the financial institution, while the negative values of the cashflow vector (or the values of $\mathbf{f}^-$) represent the payments out of the network and into individual firms’ accounts. Since the vector $\mathbf{b}$ of the firms' net positions is balanced, the cashflow vector $\mathbf{f}$ is also balanced. So the total cashflow flowing into the network equals the cashflow out of the network:
\begin{align}\label{fplusEQfminus}
       \left|\left| \mathbf{f}^+ \right|\right|
     = \left|\left| \mathbf{f}^- \right|\right|.
\end{align}
\begin{definition}
    A payment system that can discharge all obligations in an obligation network is balanced.
\end{definition}
\begin{theorem}\label{BalacedPS}
    Payment system $(L,\mathbf{f})$ where $\mathbf{f}=-\mathbf{b}$ is balanced.
\end{theorem}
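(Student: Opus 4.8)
The plan is to unwind the definition of a balanced payment system and check that the single choice $\mathbf{f} = -\mathbf{b}$ forces the flow-conservation condition to hold at every node of the payment system, from which the simultaneous dischargeability of all obligations follows immediately. First I would record, via Eq.\ \eqref{b-star} and \eqref{b-star-clear-all}, that $\mathbf{f} = -\mathbf{b}$ yields $\mathbf{b}^* = \mathbf{b} + \mathbf{f} = \mathbf{0}$, so every firm has zero net position in $(L,\mathbf{f})$.

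Next I would verify conservation firm by firm. Fix $v_i$ and consider the cash flow in which every obligation edge $e$ carries exactly its nominal value $o(e)$, the amount $f_i^+$ is paid into the network from $v_0$ when $f_i>0$, and the amount $f_i^-$ is returned to $v_0$ when $f_i<0$. Then the cash arriving at $v_i$ totals $c_i + f_i^+$ and the cash leaving $v_i$ totals $d_i + f_i^-$. Using $f_i = f_i^+ - f_i^-$ and $b_i = c_i - d_i$, the balance $c_i + f_i^+ = d_i + f_i^-$ is equivalent to $b_i = -f_i$, i.e.\ precisely to $\mathbf{f} = -\mathbf{b}$; hence conservation holds at each firm.

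Then I would check conservation at the source/sink $v_0$: the cash it sends into the network is $\left\| \mathbf{f}^+ \right\|$ and the cash it receives back is $\left\| \mathbf{f}^- \right\|$, and these are equal by Eq.\ \eqref{fplusEQfminus}, which follows from the corollary to Theorem~1 since $\mathbf{f} = -\mathbf{b}$ and $\mathbf{b}$ is balanced. With conservation established at every firm and at $v_0$, every obligation has been assigned a payment equal to its full nominal value and no node is left with an unmet inflow/outflow imbalance, so all obligations in the network are discharged simultaneously; by the definition of a balanced payment system, $(L,\mathbf{f})$ is balanced.

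I do not expect a real obstacle, since the argument is essentially bookkeeping; the one point to handle carefully is the interface between the combinatorial claim (``all obligations can be discharged'') and the linear-algebra identity ($\mathbf{b}^* = \mathbf{0}$). I would make this precise by observing that node-wise flow conservation is exactly the statement that the simultaneous settlement of all obligations together with the external transfers encoded by $\mathbf{f}$ forms a consistent cash flow, so that once conservation is in place the conclusion is immediate and matches the ``trivial'' clearing remarked upon in the Introduction.
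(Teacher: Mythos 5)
Your proposal is correct and takes essentially the same route as the paper: the core step is the node-wise identity $c_i - d_i + f_i = b_i + f_i = b_i - b_i = 0$, which is exactly the paper's one-line computation. The only difference is that you also verify conservation at $v_0$ via $\left\| \mathbf{f}^+ \right\| = \left\| \mathbf{f}^- \right\|$, a check the paper defers to the corollary immediately following the theorem.
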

\begin{proof}
    A balanced payment system has to discharge all obligations in the obligation network. That is, for every firm or node in the obligation network the sum of all incoming and outgoing cashflows has to be 0. Therefore, the total credit minus the total debt as defined in \ref{debt-and-credit} plus the external financing has to be 0:
    \begin{align}
        c_i - d_i + f_i = b_i + f_i = b_i - b_i = 0 \qquad \forall i \in \mathcal{N}
    \end{align}
\end{proof}
\begin{corollary}\label{balanced payment follows flow conservation}
    Every balanced payment system satisfies the flow conservation constraint.
\end{corollary}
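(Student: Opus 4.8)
The plan is to exhibit the balanced payment system explicitly as a flow network on the vertex set $V \cup \{v_0\}$ and then check the conservation equation node by node. Because the system is balanced, every obligation $e \in E$ is discharged in full, so the flow carried by edge $e$ is exactly $o(e)$ and there is no routing freedom left to resolve. The only remaining edges are those joining $v_0$ to each firm $i$: an edge directed from $v_0$ into $i$ carrying $f_i^+$, and an edge directed from $i$ into $v_0$ carrying $f_i^-$, exactly one of which is nonzero for each $i$.

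First I would verify conservation at an arbitrary firm node $i \in \mathcal{N}$. The total flow into $i$ along obligation edges is $\sum_{j} L_{ji} = c_i$ by Eqs.\ \eqref{debt-and-credit}, with an additional $f_i^+$ arriving from $v_0$; the total flow out of $i$ is $\sum_{j} L_{ij} = d_i$, with an additional $f_i^-$ leaving for $v_0$. Conservation at $i$ thus reads $c_i + f_i^+ = d_i + f_i^-$, i.e.\ $b_i + f_i = 0$, which is precisely the identity established in the proof of Theorem \ref{BalacedPS} (equivalently $\mathbf{f} = -\mathbf{b}$ from Eq.\ \eqref{b-star-clear-all}).

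Next I would verify conservation at the special node $v_0$: its inflow is $\sum_{i} f_i^- = \left\| \mathbf{f}^- \right\|$ and its outflow is $\sum_{i} f_i^+ = \left\| \mathbf{f}^+ \right\|$, and these coincide by Eq.\ \eqref{fplusEQfminus} (which in turn follows from the corollary $\left\| \mathbf{b}^+ \right\| = \left\| \mathbf{b}^- \right\|$). Hence the conservation equation holds at every vertex of $V \cup \{v_0\}$, which is exactly the flow conservation constraint, and the corollary follows.

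The only real subtlety I anticipate is the bookkeeping around the external node: one must remember to include $v_0$ among the vertices at which conservation is claimed and to split $\mathbf{f}$ into its positive and negative parts with the correct orientations, since ``flow conservation'' restricted to the firm nodes alone would fail at any firm with nonzero net position. Everything else reduces to direct substitution of results already proved.
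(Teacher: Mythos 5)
Your proof is correct and follows essentially the same route as the paper's: conservation at each firm node $i$ is reduced to the identity $b_i + f_i = 0$ from Theorem \ref{BalacedPS}, and conservation at the special node $v_0$ follows from $\left\| \mathbf{f}^+ \right\| = \left\| \mathbf{f}^- \right\|$ in Eq.\ \eqref{fplusEQfminus}. You simply spell out the edge-by-edge bookkeeping more explicitly than the paper does, which is a welcome but not substantive difference.
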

\begin{proof}
    The flow conservation constraint requires all flows into a node to be equal to all outflows from a node. For a balanced payment system this is true for all nodes in the obligation network, as proven in Theorem \ref{BalacedPS}. It is also true for the special node $v_0$ since the sum of all outgoing cashflow $\left\|\mathbf{f}^+ \right\|$ equals the sum of all incoming cashflow $\left\|\mathbf{f}^- \right\|$, as shown by Eq.\ \eqref{fplusEQfminus}.
\end{proof}

\subsection{An Obligation Chain}
To demonstrate the idea of a balanced cashflow vector that clears all obligations in an obligation network, let us observe a small network with four firms that contains only one chain, Fig \ref{fig:chain}. Firm $1$ represented by $v_1$ has an obligation to pay $1$ to company $v_2$, and so forth. The three obligations imply the presence of three edges: $\{e_1, e_2, e_3\}$.
\begin{figure}[ht]
    \centering
    \includegraphics[width=8cm]{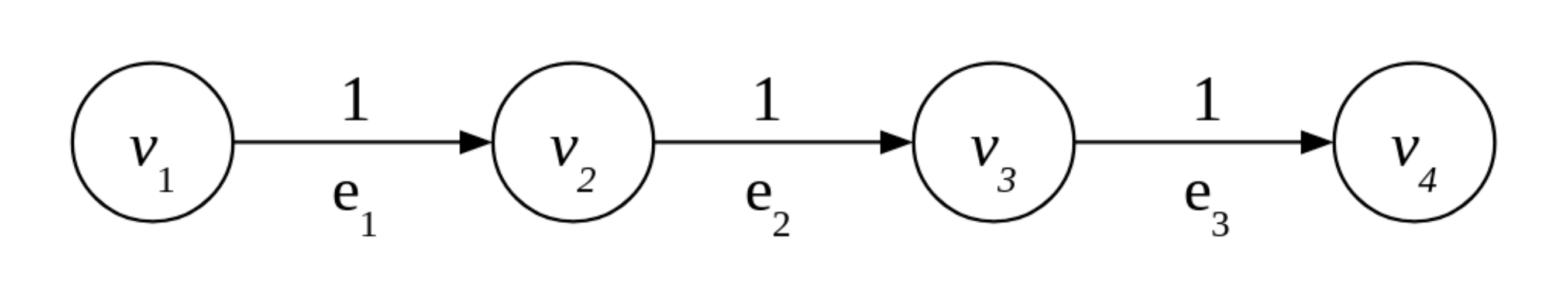}
    \caption{\small\textbf{A chain of obligations}}
\label{fig:chain}
\end{figure}
It is easy to see from the graph of the obligation network that, if Firm $1$ has access to one unit of account of liquid assets, all the firms in the chain can clear all their obligations, resulting in Firm $4$ having one unit of account more in their assets. Vector $\mathbf{b}$ for this obligation network is
\begin{equation}
\label{n-distribution}
    \begin{split}
        \mathbf{b} &= (-1,0,0,1) \\
        \mathbf{b}^- &= (1,0,0,0) \\
        NID = \left|\left| \mathbf{b}^- \right|\right| &= 1.
    \end{split}
\end{equation}
Therefore, the $NID$ or the amount of external liquidity needed to clear all obligations in this small obligation network containing only one chain is $1$.

As shown in Fig.\ \ref{fig:chainfin}, to create a payment system we have to add a new node $v_0$ representing the liquidity source with two edges: $e_5 = (v_0, v_1)$ with value $o_{(v_0,v_1)} = 1$, that represents the flow of cash into the obligation network, and $e_4 = (v_4, v_0)$ with value $o_{(v_4,v_0)} = 1$, that represents the flow of cash out of the obligation network. Therefore, clearing the obligation network leaves Firm $1$ with $1$ unit of account less in their bank account and Firm $4$ with $1$ unit of account more.

\begin{figure}[ht]
    \centering
    \includegraphics[width=8cm]{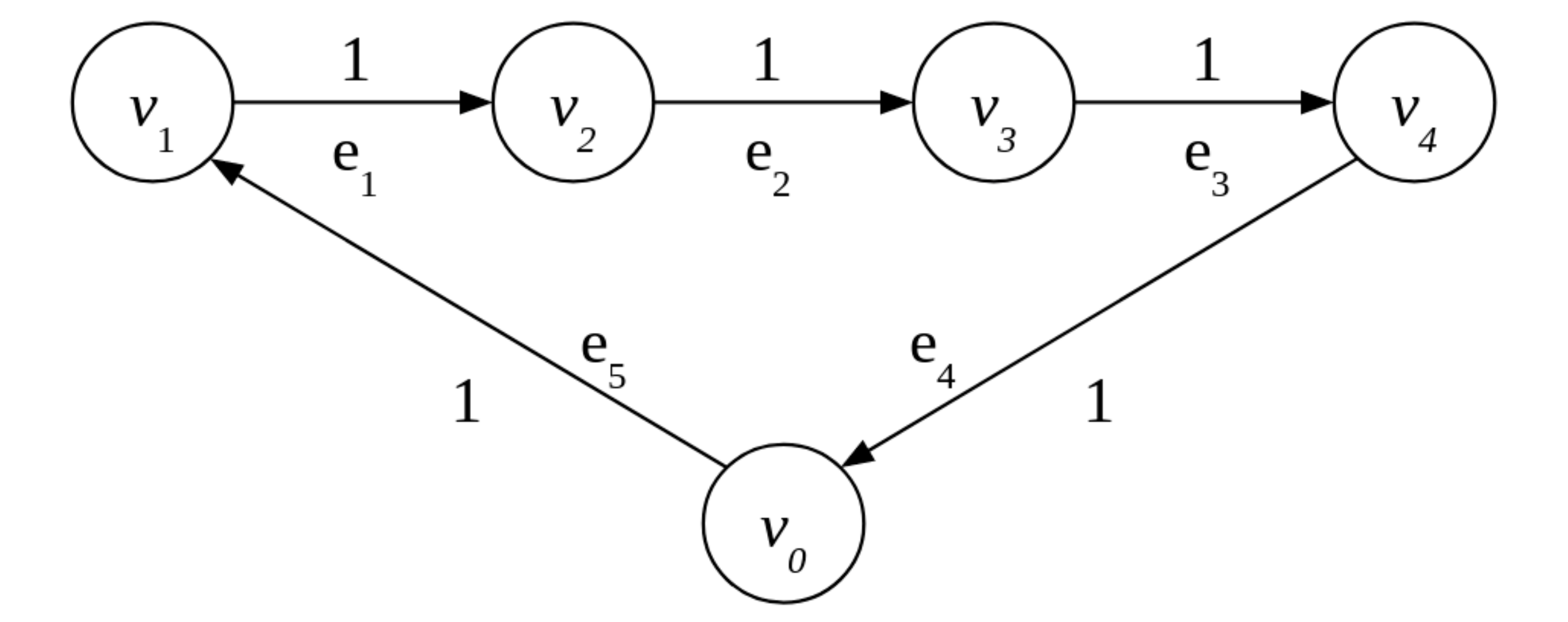}
    \caption{\small\textbf{A payment system with a chain of obligations}}
\label{fig:chainfin}
\end{figure}

Fig.\ \ref{fig:chainfin} also shows that providing liquidity is not just a problem concerning the total amount of liquid assets available, or $NID$, but also of their distribution. As shown in Eqs.\ \ref{n-distribution}, vector $\mathbf{b}^-$ shows the distribution of liquid assets needed to discharge all obligations in the obligation network. If we let the firms with just enough liquid assets to discharge all obligations act as independent actors, it will take three steps or three individual payments to discharge all obligations in the chain. Using a centralized queue with an LSM, on the other hand, will discharge all obligations simultaneously. This is an example of the time-saving property of LSMs.

\subsection{An Obligations Cycle}

Another interesting example of a small obligation network is a cycle, see Fig.\ \ref{fig:cycle}.
\begin{figure}[ht]
    \centering
    \includegraphics[width=8cm]{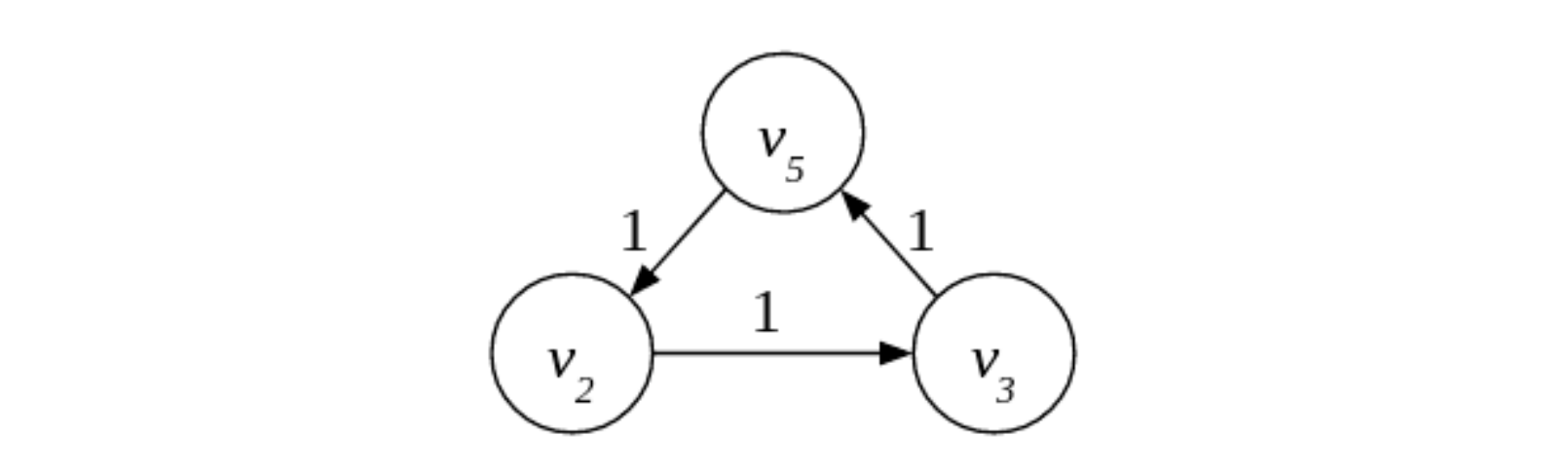}
    \caption{\small\textbf{A cycle}}
\label{fig:cycle}
\end{figure}

The $NID$ for this network is $0$. Therefore, all obligations in the cycle can be cleared without the use of external liquidity. In this special case where the vector $\mathbf{b} = \mathbf{0}$, we can say that the network is balanced. That also means that a cycle meets the flow conservation constraint at all nodes. In other words, for each firm $i$,
\begin{equation}
\label{cycle-flow-conservation}
    \sum _{j=1} ^n L_{ji} - \sum _{j=1} ^n L_{ij} = 0 \qquad \forall i \in \mathcal{N}.
\end{equation}
This flow conservation constraint equation can be written as a special case of the net positions calculation, Eq.\ \ref{net-calculation}. So the flow conservation constraint is met when all credits equal all debts for every firm in the cycle.

Although there is no need for external liquidity sources, such a simple cycle cannot be discharged if firms act as independent agents. Without the knowledge of the existence of such a cycle, the payments to discharge the obligations cannot be executed. To discharge all the obligations in a cycle without knowledge of its existence, at least one of the firms in the cycle has to use external liquidity to execute the first payment that then cascades around the cycle. Only with a centralized queue and LSM can we discharge all obligations in a cycle without the use of external liquidity sources. This is the liquidity-saving property of LSMs.

Cycles in obligation networks are the key to liquidity-saving. At this point it is worth noting that a cashflow that discharges obligations in the payment system is always a cycle. This cycle can form inside the obligation network as our example in Fig.\ \ref{fig:cycle}, or it can pass trough the special node $v_0$ as shown in Fig.\ \ref{fig:chainfin}. The flow conservation constraint is met in this case too.

\subsection{Small obligation network with a Chain and a Cycle}
Combining a chain and a cycle in a small obligation network we move closer to a real-life situation. Fig.\ \ref{fig:cyclechain} shows the union of the chain and cycle discussed above.
\begin{figure}[ht]
    \centering
    \includegraphics[width=8cm]{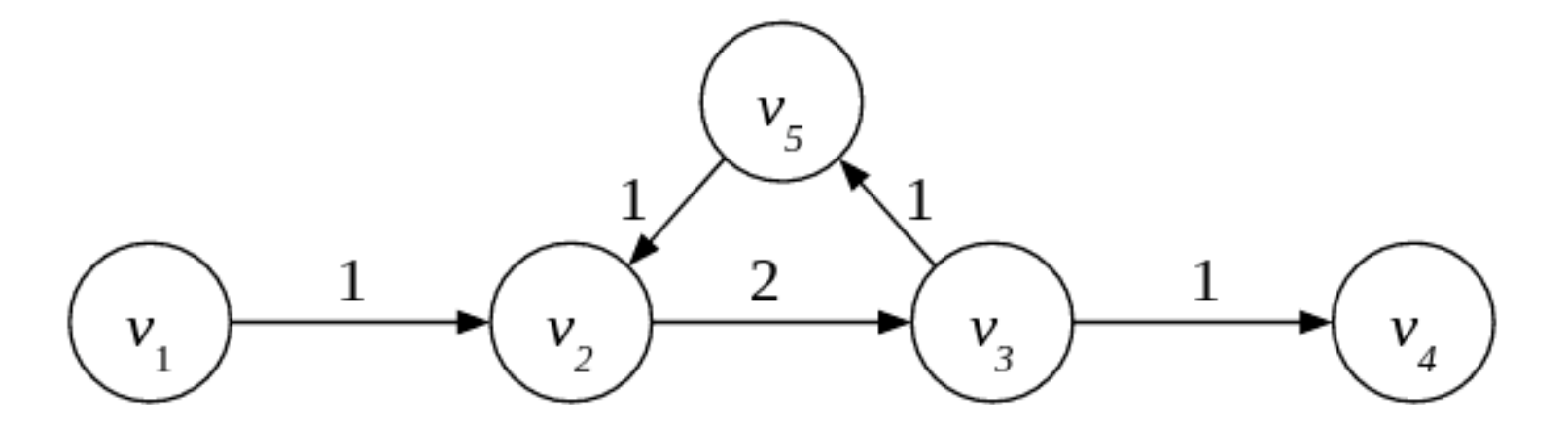}
    \caption{\small\textbf{An obligation network with a chain and a cycle}}
\label{fig:cyclechain}
\end{figure}
 The obligation network shown is obviously not balanced and needs external sources of liquidity to discharge all the obligations. The solution is similar to the chain example. The vector $\mathbf{b}$ for this obligation network is
\begin{equation}
    \begin{split}
        \mathbf{b}  & = (-1,0,0,1,0) \\
        \sum_{i=1}^n b_i & = 0 \implies \mbox{ \textbf{b} is balanced} \\
        \left \| \mathbf{b} \right \| & = 2 \implies \mbox{ obligation network is not balanced} \\
        \left \| \mathbf{b}^- \right \| & = 1 \implies NID = 1
    \end{split}
\end{equation}
By definition vector $\mathbf{b}$ is always balanced, but the obligation network usually is not. In our case $NID = 1$, so this obligation network needs an external liquidity source that can provide $1$ unit of account to discharge all the obligations.

The payment system containing chain and cycle is shown in Fig.\ \ref{fig:cyclechainfin}. Although there is enough liquidity in such a system to discharge all obligations, it cannot be done if members of the system act as independent agents. The cycle in the system prevents the smooth flow of cash. Firm $2$ cannot discharge its obligations even when it receives payment from Firm $1$. This creates a gridlock that can be resolved in several ways. One way is that Firm $2$ borrows from an external source. That implies the need for another edge from node $v_0$ to node $2$ with value $1$. The borrowed funds can be returned to $v_0$ as soon as the payment from Firm $5$ to Firm $2$ is executed.

\begin{figure}[ht]
    \centering
    \includegraphics[width=8cm]{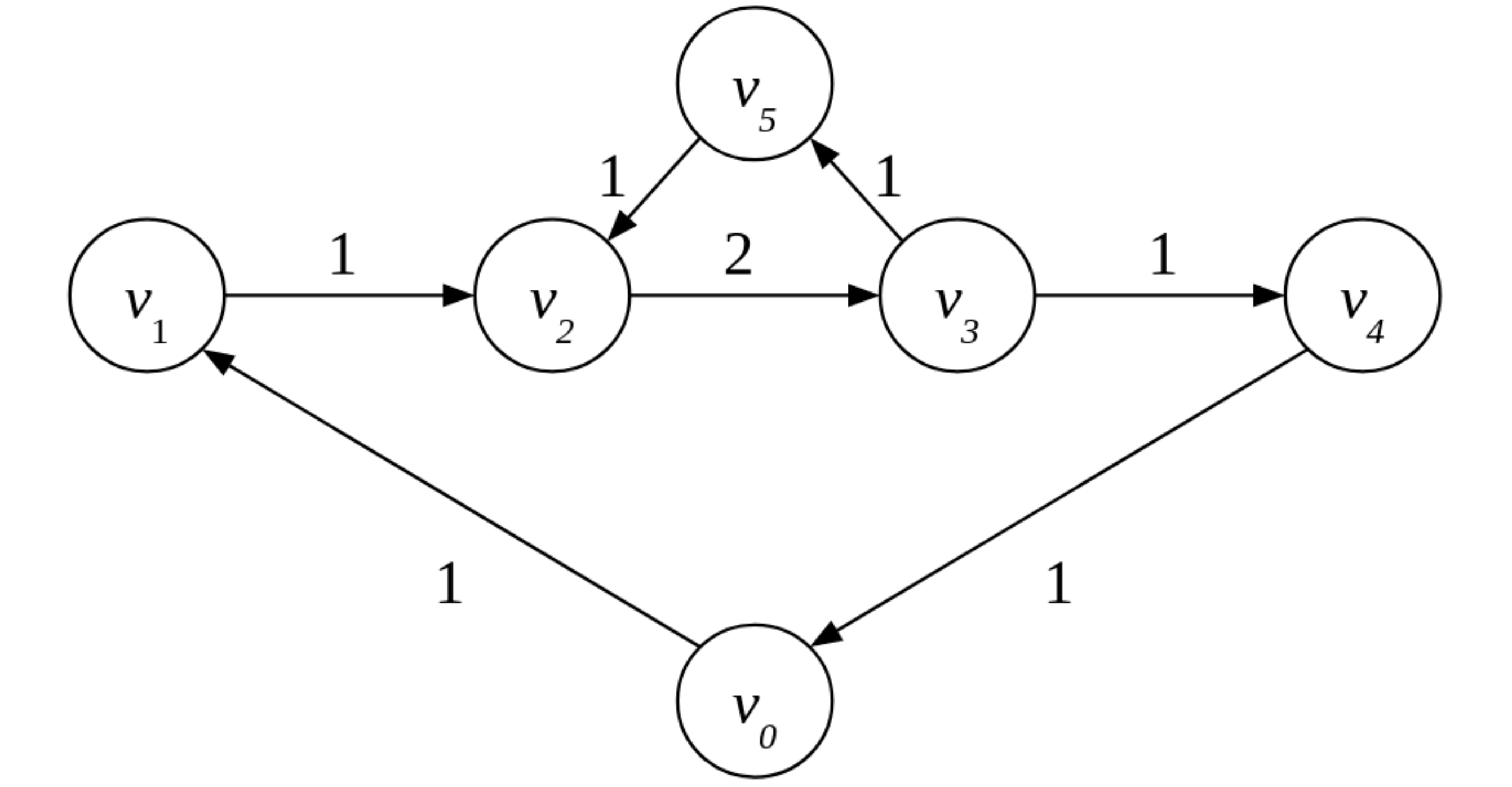}
    \caption{\small\textbf{Payment system with a chain and a cycle}}
\label{fig:cyclechainfin}
\end{figure}

This scenario is depicted in Fig.\ \ref{fig:gridresA}. Another way to resolve the gridlock is for any other firm in the cycle to borrow from an external source, which would require new edges from node $v_0$. The third option, which still assumes that 1 unit arrives at $v_2$ from $v_1$, is that Firms $2$ and $3$ agree on the partial discharge of the obligation between them. In this case, the partial payment of 1 unit of account from Firm $2$ enables Firm $3$ to discharge one of its obligations. If they decide to discharge the obligation to Firm $5$, the cycle will be discharged in full. This removes the gridlock situation created by the cycle. The flow of 1 unit from $v_0$ trough the obligation network is therefore unobstructed and all the remaining obligations can be cleared. If Firm $3$ decides to discharge the obligation towards Firm $4$ before Firm $5$ we are back to gridlock.

\begin{figure}[ht]
    \centering
    \includegraphics[width=8cm]{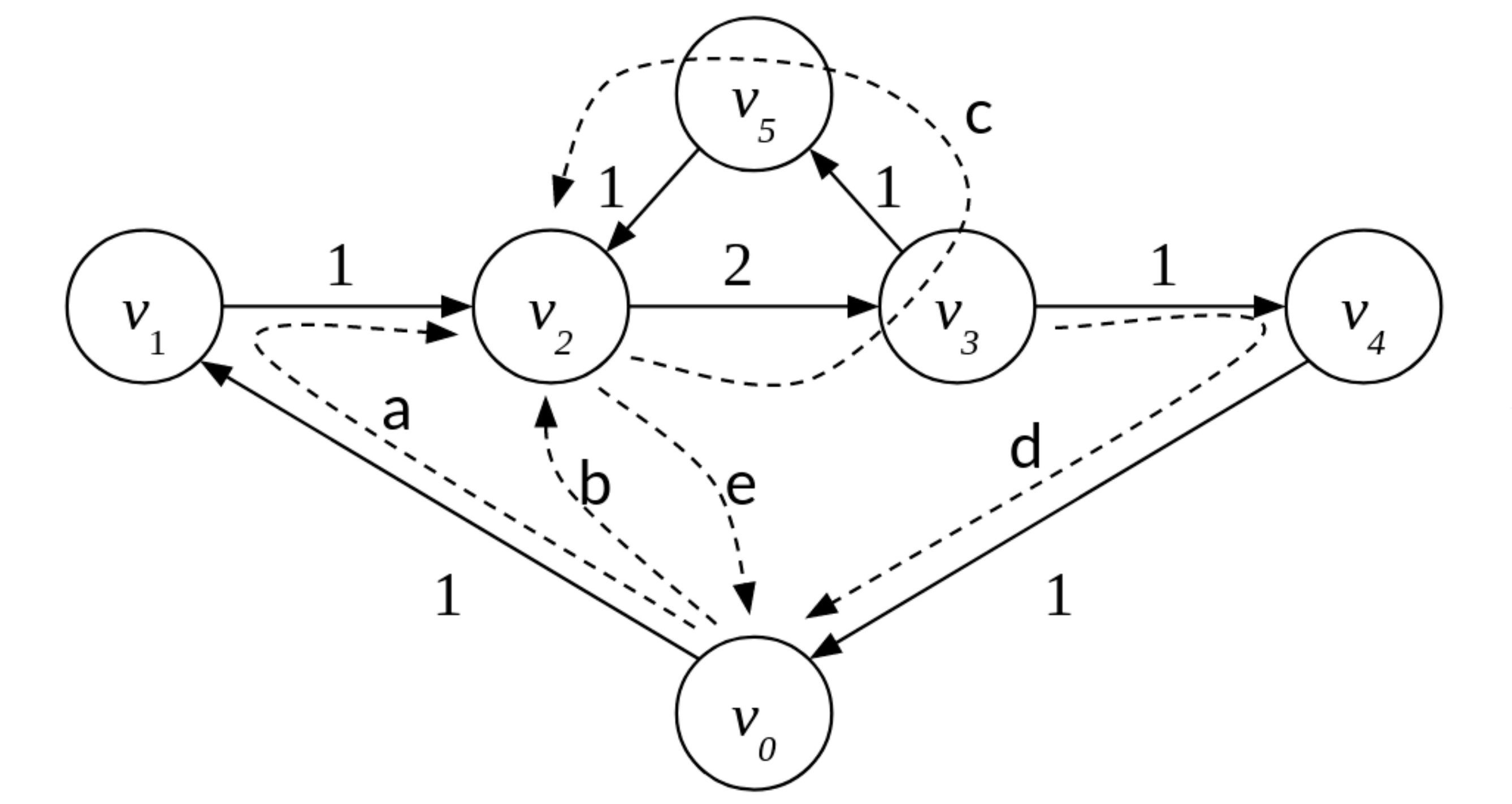}
    \caption{\small\textbf{Example of gridlock resolution scenario. Sequence of steps depicted with dashed arrows is marked with letters from ``a'' to ``e''}}
\label{fig:gridresA}
\end{figure}

Only putting the whole obligation network in a queue with an LSM will resolve the gridlock situation without the need for additional liquidity or special agreements among the firms. The solution is to identify the cycles that discharge the obligations simultaneously. Our example contains two cycles. The first, smaller cycle involving firms $(2,3,5,2)$ is located inside the obligation network. Obligations with value of $1$ can be discharged and the cycle can be removed from the obligation network without affecting the value of vector $\mathbf{b}$. Therefore, the $NID$ or the minimal requirement for external liquidity to discharge all the obligations in the obligation network remains the same. The situation in the payment system after removal of this cycle is a chain with the liquidity source, as shown in Fig.\ \ref{fig:chainfin}. This chain with a liquidity source forms the second cycle that is discharged in full with the use of liquidity from the external source.

\section{General Formulation}

In the following sections we build on the basic definitions and examples of the previous sections to develop some deeper results with important applications to large-scale payment systems.

\subsection{A Cycle Is a Balanced Payment Subsystem}

A cycle $E_c = \{e_{c1}, e_{c2}, \dots e_{ck} \}$ in $\mathcal{G}$ is a closed sequence of $k$ edges that connects nodes $V_c = \{v_{c1}, v_{c2} \dots v_{ck}, v_{c1}\}$ consecutively, where $v_{c1} \dots v_{ck}$ are distinct.
\begin{theorem}
    A cycle $(V_c,E_c,p)$ where all weights of the edges are set to the minimum obligation in the cycle $p = min(o(e) | e \in E_c)$ is a \emph{balanced payment subsystem}.
\end{theorem}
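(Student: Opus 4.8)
The plan is to show that the weighted cycle, viewed as an obligation network in its own right, has net position vector $\mathbf{b} = \mathbf{0}$, and then to invoke Theorem \ref{BalacedPS} with trivial external cashflow to conclude that the associated payment subsystem is balanced. So the target is really just the identity ``a uniformly weighted cycle is flow-conserving at every node''; everything else is bookkeeping.

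First I would write down the nominal liability matrix $L_c$ of $(V_c,E_c,p)$. Because the nodes $v_{c1},\dots,v_{ck}$ are distinct and the edges of $E_c$ join them consecutively, each node $v_{ci}$ is the source of exactly one edge of the cycle and the target of exactly one edge of the cycle. After resetting every weight to $p$, the matrix $L_c$ therefore has exactly one nonzero entry, equal to $p$, in each row and exactly one nonzero entry, equal to $p$, in each column (it is $p$ times the permutation matrix of the cyclic shift). Hence, by Eq.\ \eqref{debt-and-credit}, $d_{ci}=p$ and $c_{ci}=p$ for every $i$, and by Eq.\ \eqref{net-calculation} the cycle's net position vector is $\mathbf{b}_c=\mathbf{c}-\mathbf{d}=\mathbf{0}$.

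Next I would observe that the cashflow clearing all obligations is $\mathbf{f}=-\mathbf{b}_c=\mathbf{0}$ by Eq.\ \eqref{b-star-clear-all}, so the payment subsystem built on the cycle needs no external liquidity at all, i.e.\ $NID=\lVert \mathbf{b}_c^-\rVert=0$ by Eq.\ \eqref{NID}. Applying Theorem \ref{BalacedPS} with $\mathbf{f}=\mathbf{0}$ — or equivalently checking directly that $c_{ci}-d_{ci}+f_{ci}=p-p+0=0$ at every node — shows that all obligations of the cycle can be discharged simultaneously, which by Corollary \ref{balanced payment follows flow conservation} is exactly flow conservation at each node of the cycle and at $v_0$ (trivially, since no edges touch $v_0$). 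That is the defining property of a balanced payment subsystem. For the ``subsystem'' aspect I would add the remark that $p=\min(o(e)\mid e\in E_c)\le o(e)$ for every edge of the cycle, so the weighted cycle is a genuine sub-obligation of $\mathcal{G}$: subtracting it lowers each involved entry of $L$ by $p\ge 0$ without producing a negative obligation and, since $\mathbf{b}_c=\mathbf{0}$, without changing any $b_i$ of the remaining network.

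I expect the only delicate point to be justifying the ``exactly one in-edge and exactly one out-edge per node'' claim cleanly from the stated definition of a cycle (distinct nodes, consecutively connecting edges), and noting that parallel edges in $\mathcal{G}$ cause no difficulty because a cycle selects a single edge for each consecutive pair of nodes. There is no analytic obstacle here; the entire content is the elementary observation that assigning a common weight to the edges of a directed cycle equalizes in-flow and out-flow at every vertex.
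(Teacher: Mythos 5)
Your proposal is correct and follows essentially the same route as the paper: the core observation in both is that each node of the cycle has exactly one incoming and one outgoing edge of equal weight $p$, so $b_i = p - p = 0$ at every node, giving flow conservation and hence a balanced subsystem needing no external liquidity. Your additional remarks (the permutation-matrix view of $L_c$ and the observation that $p=\min$ guarantees the cycle is a genuine, subtractable sub-obligation of $\mathcal{G}$) are welcome elaborations but do not change the argument.
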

\begin{proof}
    The net position $b_i$ of every node in such a system is equal to zero since all obligations are equal to $p$ and every node has exactly one incoming and one outgoing edge with the same value.
    \begin{align}
        b_i = p - p = 0 \qquad \forall i \in \{1 \dots k \}.
    \end{align}
    Therefore, such a balanced payment subsystem meets the flow conservation constraint of Eq.\ \eqref{cycle-flow-conservation} and thus does not need external sources of liquidity to clear all its obligations.
\end{proof}
\begin{theorem}
\label{substracting cycle}
    Subtracting a cycle from the obligation network does not change the vector of net positions $\mathbf{b}$.
\end{theorem}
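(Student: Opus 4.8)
The plan is to show that removing a cycle (with all edge weights equal to $p$, the minimum obligation along the cycle) from the obligation network leaves every firm's net position $b_i$ unchanged. The key observation is that each node $v_{ci}$ in the cycle $V_c$ has exactly one incoming cycle-edge and one outgoing cycle-edge, both of weight $p$. So when we subtract the cycle from the liability matrix $L$, we decrease exactly one entry in row $ci$ by $p$ (reducing that firm's total debt $d_{ci}$ by $p$) and decrease exactly one entry in column $ci$ by $p$ (reducing that firm's total credit $c_{ci}$ by $p$). Since $b_{ci} = c_{ci} - d_{ci}$, the two changes cancel: $b_{ci} \mapsto (c_{ci} - p) - (d_{ci} - p) = c_{ci} - d_{ci} = b_{ci}$.

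First I would set up notation for the subtracted network: let $L'$ be the liability matrix after removing the cycle, where $L'_{ij} = L_{ij} - p$ if the edge $(v_i,v_j)$ is one of the cycle edges $e \in E_c$, and $L'_{ij} = L_{ij}$ otherwise. I should note that this subtraction is well-defined (stays nonnegative) precisely because $p = \min(o(e) \mid e \in E_c)$, so no obligation goes negative. Next I would compute the new debt and credit vectors $d'_i = \sum_j L'_{ij}$ and $c'_j = \sum_i L'_{ij}$ using the definitions in Eq.~\eqref{debt-and-credit}. For a firm $i$ not on the cycle, nothing in its row or column changes, so $d'_i = d_i$ and $c'_i = c_i$. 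For a firm $i = ck$ on the cycle, exactly one term in its row sum drops by $p$ (the outgoing cycle edge) and exactly one term in its column sum drops by $p$ (the incoming cycle edge), giving $d'_{ck} = d_{ck} - p$ and $c'_{ck} = c_{ck} - p$.

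Then I would conclude by applying Eq.~\eqref{net-calculation}: for every firm $i \in \mathcal{N}$, the new net position is $b'_i = c'_i - d'_i$. In both cases — $i$ on the cycle or not — the corrections to credit and debt are identical (either both zero, or both $-p$), so $b'_i = c'_i - d'_i = c_i - d_i = b_i$. Hence $\mathbf{b}' = \mathbf{b}$, which is the claim.

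I do not expect a serious obstacle here; the proof is essentially bookkeeping. The one point that requires a little care is the claim that each cycle node has \emph{exactly} one incoming and one outgoing cycle-edge — this uses the definition of a cycle as a closed sequence through \emph{distinct} nodes $v_{c1},\dots,v_{ck}$, so no node is visited twice and the cycle edges incident to a given node are unambiguous. A secondary subtlety is that the graph $\mathcal{G}$ may have parallel edges, so ``subtracting the cycle'' should be understood as reducing the weight of the specific chosen edges in $E_c$ (or equivalently reducing the relevant entries of $L$ by $p$), not deleting all obligations between the two firms; once that is stated clearly, the matrix-level argument via $L'$ goes through verbatim.
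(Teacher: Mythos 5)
Your proposal is correct and follows essentially the same argument as the paper: each cycle node loses exactly $p$ from both its credit and its debt, so $b_i = (c_i - p) - (d_i - p) = c_i - d_i$ is unchanged. Your version is just a more careful write-up of the same computation, with the added (and welcome) observations about nonnegativity via $p = \min$, off-cycle nodes, and parallel edges.
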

\begin{proof}
    All the flows in a cycle are equal to $p$. When a cycle is subtracted from the obligation network both credit and debt positions of every firm included in a cycle decrease by the same amount $p$, leaving the net position vector $\mathbf{b}$ unchanged:
    \begin{equation}
    \label{b-unchanged}
         b_i =  c_i - p - (d_i - p) = c_i - d_i = b_i \qquad \forall v_i \in V_c.
    \end{equation}
\end{proof}
It is important to note that, while from the system's viewpoint the removal of a balanced payment subsystem does not change the need for external financing, from the individual firm's perspective it makes a big difference, since with the removal of a balanced payment subsystem the corresponding gridlock situation is cleared and there is no need for external financing to resolve it. This reduction of the need for external financing can be observed as a reduction of the obligations in the payment system.

\subsection{Finding the Maximum-Weight Set of Cycles}
\label{maxcycles}
The problem of cycle elimination from directed graphs to get an acyclic graph is a well-studied area in graph theory. An overview of cycles in a graph is provided by \cite{CycleBasis}, and a fast parallel algorithm for finding the cycles in a graph is proposed in \cite{CycleRemoval}. Here we develop our own method, starting with the concept of ``weight''.
\begin{definition}
    We define the \emph{weight of an obligation network} with set of edges $E$ as a function $ w\colon \mathcal{G} \rightarrow \mathbb{R}$ whose value is the sum of all the obligations in the network:
    \begin{align}
        w(\mathcal{G}) = \sum_{e \in E} o(e).
    \end{align}
    Similarly, a cycle $\mathcal{G}_c$ of length $k$ with all its obligations of value $p$ has weight:
    \begin{align}
        w(\mathcal{G}_c) = pk.
    \end{align}
\end{definition}

Removing a cycle $\mathcal{G}_c$ from the obligation network reduces its weight. The weight of the residual obligation network $\mathcal{G}_r$ is:
\begin{align}
    w(\mathcal{G}_r) = w(\mathcal{G}) - w(\mathcal{G}_c)
\end{align}
Therefore, reducing the need for external financing by the individual firms can be achieved by removing all the cycles from the obligation network, which is also equivalent to resolving all the gridlocks. To achieve this we need to solve the following problem.
\begin{problem}
\label{maxweightproblem}
    Find a sequence of cycles $\mathcal{G}_{ci}$ and residual obligation networks $\mathcal{G}_{ri}$ such that
    \begin{align}
        \sum_{i=1}^q w(\mathcal{G}_{ci})
            \text{ is maximum},
    \end{align}
    where
    \begin{align}
        \begin{split}
            \mathcal{G}_{c1}, \mathcal{G}_{r1}
                &\subset \mathcal{G} \\
            \mathcal{G}_{c2}, \mathcal{G}_{r2}
                & \subset \mathcal{G}_{r1} \\
                & \vdots \\
            \mathcal{G}_{cq}, \mathcal{G}_{rq}
                & \subset \mathcal{G}_{r(q-1)}
        \end{split}
    \end{align}
    and $\mathcal{G}_{c1}, \hdots , \mathcal{G}_{cq}$ are cycles.
\end{problem}
Sequential elimination of cycles from the obligation network will always lead to a residual network $\mathcal{G}_{rq}$ that is acyclic. The exact number of cycles that will be eliminated is not known upfront and depends on the methods used to find them. For example, every directed acyclic graph has a topological ordering, i.e.\ an ordering of the vertices such that the starting end-point of each edge occurs earlier in the ordering than the ending end-point of that edge. The ordering can be found in linear time using Kahn’s algorithm for Topological Sorting \cite{Kahn}. This would be a possible formal test. Alternatively, the cycle-finding algorithm usually has a ``cycle not found'' exit condition which is also an acyclic graph test. The algorithm repeats until a cycle can be found. When no cycles can be found any longer, whatever is left is acyclic.

\begin{definition}
\label{maxweightsetcycles}
    Given an obligation network $\mathcal{G}$, a \emph{maximum-weight set of cycles} $\{ \mathcal{G}_{c1}, \hdots , \mathcal{G}_{cq} \}$ is one of the solutions to Problem \ref{maxweightproblem}.\footnote{There are many possible maximum-weight sets of cycles but only one maximum weight.}
\end{definition}

It is known that there is always a way to eliminate all the cycles and that the solution is not unique. The problem is that the removal of one cycle can break other embedded cycles, so the solution depends on the order in which the cycles are found. This makes finding the maximum-weight set of cycles even harder. The solution is not to look for cycles at all but to use the concept of balanced payment system and minimum-cost flow instead.

\subsection{The Minimum-Cost Flow Problem}
The starting point of the method is a perfectly balanced payment system as described in Sec.\ \ref{ClearingAllObigations}. We have an obligation network $\mathcal{G}$ with associated nominal liabilities matrix $L$, net position vector $\mathbf{b}$, and external cashflow vector $\mathbf{f}=-\mathbf{b}$, forming a balanced payment system $(L,\mathbf{f})$. We know that the cashflow through this balanced payment system equals the $NID$ as defined in Eq. \eqref{NID}. Now we try to find a balanced payment system $(M,\mathbf{f})$, where the nominal liabilities \emph{minimum-cost maximum-flow matrix} $M$ represents the minimum-weight sub-network $\mathcal{G}_m$ of the obligation network $\mathcal{G}$.
To find such nominal liabilities matrix $M$ we have to:
\begin{itemize}
    \item Define a Grandsum function
    $\mu\colon \mathbb{R}^{n^2} \rightarrow \mathbb{R}$ which is the sum of all the elements of a given square $n\times n$ matrix. Looking for the minimum of the function $\mu(M)$ is equivalent to looking for the minimum-weight sub-network $\mathcal{G}_m$.
    \item Make sure that that payment system $(M,\mathbf{f})$ is balanced. Therefore the column sum or credit vector minus the row sum or debt vector of matrix $M$ has to equal matrix $L$'s net positions vector $\mathbf{b}$.
    \item Make sure we are not introducing edges between nodes in sub-network $\mathcal{G}_m$ that do not exist in the obligation network $\mathcal{G}$. Therefore, all matrix elements $M_{ij}$ must have a value between 0 and $L_{ij}$. 
\end{itemize}

We can pose this as the following optimization problem:

\begin{problem}
\label{MCF problem}
    Find the liability matrix $M$ of the obligation network $\mathcal{G}_m$ such that its Grandsum function $\mu$ is minimum:
    \begin{align}\label{mcf2}
        min\ \mu(M) =
            min \sum_{i=1}^n \sum_{j=1}^n M_{ij}
    \end{align}
    subject to the constraints:
    \begin{align}\label{mcf3}
        \sum_{j=1}^n M_{ij} - \sum_{j=1}^n M_{ji} 
            &= b_i \qquad 
                &&\forall i \in \mathcal{N} \\
        0 \leq M_{ij} 
            &\leq L_{ij} \qquad 
                &&\forall i,j \in \mathcal{N}.
    \end{align}
\end{problem}

The reason we need to find $M$ is that it is the solution to the standard Minimum-Cost Flow (MCF) problem as defined in graph theory. The solution to the MCF problem equals all flows in a cycle from the liquidity source $v_0$, represented by vector $\mathbf{b}^-$, through the obligation network back to the liquidity source, represented by vector $\mathbf{b}^+$. We are looking for the shortest paths that can carry the NID through the obligation network.

Consistently with standard graph theory, we define a new node $s$ as a source and connect it to nodes representing firms with negative net positions $\mathbf{b}^-$; and we define node $t$, connecting it to nodes with positive net positions $\mathbf{b}^+$. The flow is set to the flow that clears all the obligations, i.e.\ $NID$ or $\left \| \mathbf{b}^- \right \|$. In the standard MCF solution the cost of the flows through different edges can vary. In our case the cost of the flow through different edges is the same, since we do not want to prioritize any specific flow or firm. So using the standard MCF solution all costs of a flow through an edge are set to 1.

Any minimum-cost flow algorithm will find a set of chains that can carry the max-flow $NID$ through the obligations at minimum cost. There are many known algorithms to solve the minimum-cost flow problem, e.g.\ see \cite{Kiraly} for an overview. A polynomial-time algorithm was proposed by \cite{Orlin}. The solution is not unique, but the value and the cost of the flow through the edges of the set of minimum-cost flows are always the same.

\begin{theorem}
\label{Tetris-theorem}
    Subtracting the minimum-cost max-flow flow solution $M$ from the nominal liabilities matrix $L$ leaves a balanced payment subsystem $(T,\mathbf{0})$ that requires no external liquidity source to clear the obligations:
    \begin{align}\label{Tetris}
        T = L - M.
    \end{align}
\end{theorem}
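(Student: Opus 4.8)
The plan is to verify directly that the matrix $T = L - M$ defines a payment subsystem whose net position vector is identically zero, and then invoke the earlier results (Theorem~\ref{BalacedPS} and Corollary~\ref{balanced payment follows flow conservation}) to conclude it is balanced and needs no external liquidity. First I would check that $T$ is a legitimate nominal liabilities matrix for a sub-network of $\mathcal{G}$: from the MCF constraint $0 \leq M_{ij} \leq L_{ij}$ we get $0 \leq T_{ij} = L_{ij} - M_{ij} \leq L_{ij}$, so every entry is nonnegative and no new edges are introduced; the diagonal stays zero since both $L$ and $M$ have zero diagonal. So $(T,\mathbf{0})$ is a well-defined payment subsystem.

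The heart of the argument is the net position computation. Let $\mathbf{b}_T$ denote the net position vector of the obligation network described by $T$. For each firm $i$,
\begin{align}
    (b_T)_i = \sum_{j=1}^n T_{ji} - \sum_{j=1}^n T_{ij}
        = \left( \sum_{j=1}^n L_{ji} - \sum_{j=1}^n L_{ij} \right)
          - \left( \sum_{j=1}^n M_{ji} - \sum_{j=1}^n M_{ij} \right)
        = b_i - b_i = 0,
\end{align}
where the first parenthesized term is $b_i$ by the definition of the net position vector in Eq.~\eqref{net-calculation}, and the second is $b_i$ because $M$ satisfies the flow-balance constraint \eqref{mcf3} of Problem~\ref{MCF problem} (note the sign convention there gives $\sum_j M_{ij} - \sum_j M_{ji} = b_i$, i.e.\ the credit-minus-debt combination used here equals $b_i$ as well, up to the same bookkeeping used throughout the paper). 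Hence $\mathbf{b}_T = \mathbf{0}$, so the external cashflow vector needed to clear $T$ is $\mathbf{f}_T = -\mathbf{b}_T = \mathbf{0}$, which is exactly the claim that the subsystem is $(T,\mathbf{0})$.

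Finally I would close the argument by applying the general results already established. Since $\mathbf{b}_T = \mathbf{0}$, Theorem~\ref{BalacedPS} (with $\mathbf{f} = -\mathbf{b}_T = \mathbf{0}$) shows that $(T,\mathbf{0})$ is a balanced payment system, and Corollary~\ref{balanced payment follows flow conservation} then guarantees it satisfies flow conservation at every node, so all its obligations can be discharged simultaneously with no external liquidity. The main obstacle I anticipate is purely a matter of sign/index hygiene: the paper uses $d_i = \sum_j L_{ij}$ for debt (row sum) and $c_i = \sum_j L_{ji}$ for credit (column sum), while Problem~\ref{MCF problem} writes the constraint as $\sum_j M_{ij} - \sum_j M_{ji} = b_i$, which is debt-minus-credit rather than credit-minus-debt; I would need to state the convention carefully and make sure the subtraction $L - M$ is performed consistently so that the two $b_i$ terms genuinely cancel rather than add. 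Once the bookkeeping is pinned down, the rest is immediate from the linearity of row and column sums.
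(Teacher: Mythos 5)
Your proposal is correct and follows essentially the same route as the paper's own proof: both show $\mathbf{b}_T=\mathbf{0}$ by linearity of the row/column sums, using the definition of $\mathbf{b}$ for $L$ and the flow-balance constraint \eqref{mcf3} for $M$ so that the two $b_i$ terms cancel. Your additional checks (that $0 \leq T_{ij} \leq L_{ij}$ so $T$ is a legitimate sub-network, and the explicit flag that Eq.~\eqref{net-calculation} and Eq.~\eqref{mcf3} use opposite sign conventions for $b_i$) are genuine improvements in rigor over the paper's version, which silently adopts the \eqref{mcf3} convention throughout its derivation.
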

\begin{proof}
    This means that all the edges in the remaining nominal liabilities matrix $T$ are part of a cycle. The $T$ must be balanced, so the vector $\mathbf{b}$ for the matrix $T$ must be $\mathbf{0}$.
    \begin{align}\label{T_balanced}
        \sum _{j=1} ^n T_{ij} - \sum _{j=1} ^n T_{ji} = 0 \qquad \forall i \in \mathcal{N}
    \end{align}
    We can show that this is always true since $L$ and $M$ have the same vector $\mathbf{b}$.
    \begin{align}
        \begin{split}
            T_{ij} & = L_{ij} - M_{ij} \\
            \sum _{j=1} ^n T_{ij} - \sum _{j=1} ^n T_{ji} & = 0 \qquad \forall i \in \mathcal{N} \\
            \sum _{j=1} ^n (L_{ij} - M_{ij}) - \sum _{j=1} ^n (L_{ji} -M_{ji}) & = 0 \qquad \forall i \in \mathcal{N} \\
            \sum _{j=1} ^n (L_{ij} - L_{ji}) - \sum _{j=1} ^n (M_{ij} -M_{ji}) & = 0 \qquad \forall i \in \mathcal{N} \\
            b_i - b_i & = 0 \qquad \forall i \in \mathcal{N}
        \end{split}
    \end{align}
    This proves that the $T$ is composed of cycles only.
\end{proof}
\begin{corollary}
    $T$ is a \emph{maximum-weight} balanced payment subsystem.
\end{corollary}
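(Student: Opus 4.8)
The plan is to reduce the claim to the optimality of $M$ that is already built into Problem \ref{MCF problem}. First I would record the bookkeeping identity: since $T = L - M$ and the weight of a network equals the Grandsum of its liability matrix, we have $w(T) = \mu(L) - \mu(M)$. Because $\mu(L) = w(\mathcal{G})$ is fixed, maximizing $w(T)$ among all balanced payment subsystems of $\mathcal{G}$ is \emph{equivalent} to minimizing $\mu(M)$ over the matrices that can arise in this way. The crux is therefore to identify the feasible region of Problem \ref{MCF problem} with exactly the set $\{\, L - T' : T' \text{ is a balanced payment subsystem of } \mathcal{G} \,\}$.

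Next I would take an arbitrary balanced payment subsystem, represented by its liability matrix $T'$ with $0 \le T'_{ij} \le L_{ij}$ for all $i,j$ (this is what ``subsystem of $\mathcal{G}$'' means: no obligation exceeds the one present in $\mathcal{G}$, and no spurious edges are created), and with net position vector $\mathbf{0}$ (this is what ``balanced'' means, by the definition preceding Theorem \ref{BalacedPS} together with the flow-conservation characterization of Eq.\ \eqref{cycle-flow-conservation}). Setting $M' := L - T'$, I would check the two constraint families of Problem \ref{MCF problem}: the box constraints $0 \le M'_{ij} \le L_{ij}$ are immediate from $0 \le T'_{ij} \le L_{ij}$, and the conservation constraint $\sum_j M'_{ij} - \sum_j M'_{ji} = b_i$ follows because the net-position map is linear in the liability matrix and $T'$ contributes $\mathbf{0}$ — this is the same computation as in the proof of Theorem \ref{Tetris-theorem}, read in reverse. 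Hence $M'$ is feasible, and since $M$ minimizes $\mu$ over the feasible set, $\mu(M) \le \mu(M') = \mu(L) - w(T')$.

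Then I would close the loop: $w(T) = \mu(L) - \mu(M) \ge \mu(L) - \bigl(\mu(L) - w(T')\bigr) = w(T')$, so no balanced payment subsystem of $\mathcal{G}$ has larger weight than $T$; combined with Theorem \ref{Tetris-theorem}, which already establishes that $T$ is itself a balanced payment subsystem, this proves $T$ is maximum-weight. I would also add a sentence remarking that a balanced payment subsystem is precisely a sub-network all of whose edges lie on cycles — its net position vector vanishes, so by the argument of Theorem \ref{Tetris-theorem} it decomposes into cycles — which is why ``maximum-weight balanced payment subsystem'' coincides with the maximum-weight-set-of-cycles notion of Definition \ref{maxweightsetcycles} and Problem \ref{maxweightproblem}, thereby also solving Problem \ref{maxweightproblem} via the MCF route advertised in Section \ref{maxcycles}.

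The main obstacle I anticipate is definitional rather than computational: pinning down ``balanced payment subsystem of $\mathcal{G}$'' as a matrix $T'$ with $0 \le T' \le L$ entrywise and vanishing net positions, and being explicit that the box constraint is exactly what prevents the creation of obligations absent from $\mathcal{G}$. Once that correspondence between feasible $M$'s and balanced payment subsystems is stated cleanly, the optimality conclusion is a one-line consequence of the linearity of the net-position map and the minimality defining $M$.
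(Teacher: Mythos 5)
Your proposal is correct and follows essentially the same route as the paper: the paper's own two-sentence proof simply observes that since $M$ was chosen to minimize $\mu(M)$, the complement $T = L - M$ has maximum weight. Your version supplies the step the paper leaves implicit — the correspondence between feasible solutions of Problem \ref{MCF problem} and complements $L - T'$ of balanced payment subsystems — which makes the optimality claim rigorous rather than merely asserted, but it is the same underlying argument.
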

\begin{proof}
    Since we subtracted the minimum value of chains $M$ from $L$, the remaining obligation network $T$ consists of the maximum value of cycles. Therefore, we have a maximum-weight balanced payment subsystem $T$.
\end{proof}

\section{Using Balanced Payment Subsystems in the Trade Credit Market}

In normal business situations we seldom have enough cash available to clear all our obligations. Therefore, we have to adjust our model to reflect the scarcity of liquidity.

The trade credit market is an interesting example since there are no liquidity sources at all. We can look at it as a payment system where the external financing vector $\mathbf{f}$ equals $\mathbf{0}$. So we have a payment system $(L,\mathbf{0})$ Still, it is possible to discharge the obligations bilaterally or multilaterally by applying the balanced payment subsystem idea. 

\begin{theorem}
    Subtracting a \emph{balanced payment subsystem} from the payment system does not change the net position vector $\mathbf{b}$.
\end{theorem}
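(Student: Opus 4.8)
The plan is to reduce the statement to the one structural fact that characterizes a balanced payment subsystem --- that it satisfies the flow conservation constraint at every node (Corollary~\ref{balanced payment follows flow conservation}) --- and then repeat, with an arbitrary balanced subsystem in place of a single cycle, the row-sum/column-sum bookkeeping already used in the proofs of Theorems~\ref{substracting cycle} and~\ref{Tetris-theorem}.

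First I would fix notation: let $B$ be the nominal liabilities matrix of the balanced payment subsystem to be removed. Since it is a \emph{subsystem} of $(L,\mathbf{0})$ we have $0 \le B_{ij} \le L_{ij}$ for all $i,j$, so the residual matrix $R := L - B$ has nonnegative entries and introduces no edge absent from $\mathcal{G}$; in particular $R$ is again a legitimate obligation network, which is what makes the subtraction meaningful. Because $(B,\mathbf{0})$ is balanced, Theorem~\ref{BalacedPS} and Corollary~\ref{balanced payment follows flow conservation} give flow conservation at every firm node of $B$, i.e.\ its net position vector is $\mathbf{0}$:
\begin{align}
    \sum_{j=1}^n B_{ij} = \sum_{j=1}^n B_{ji} \qquad \forall i \in \mathcal{N}.
\end{align}

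The second step is the computation itself: writing the net position of firm $i$ in $R$ as the column sum minus the row sum and splitting each into an $L$-part and a $B$-part,
\begin{align}
    \sum_{j=1}^n R_{ji} - \sum_{j=1}^n R_{ij}
        &= \Bigl(\sum_{j=1}^n L_{ji} - \sum_{j=1}^n L_{ij}\Bigr)
           - \Bigl(\sum_{j=1}^n B_{ji} - \sum_{j=1}^n B_{ij}\Bigr) \\
        &= b_i - 0 = b_i \qquad \forall i \in \mathcal{N},
\end{align}
which is exactly Eq.~\eqref{b-unchanged} with the single cycle replaced by $B$. Hence $\mathbf{b}$ is unchanged.

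I expect the only real subtlety to be the scope of the term ``balanced payment subsystem''. In the trade-credit setting of this section the external cashflow vector is $\mathbf{0}$, so ``balanced'' forces the subsystem's own firm-node net positions to vanish and the argument above applies verbatim. If instead one allows the removed subsystem to route flow through the liquidity node $v_0$ with an internal cashflow vector $\mathbf{g}$, then ``balanced'' means $b_i^{B}+g_i=0$ at each firm together with $\left\|\mathbf{g}^+\right\|=\left\|\mathbf{g}^-\right\|$ at $v_0$ (Eq.~\eqref{fplusEQfminus}); in that case the invariant quantity is the net position of the residual \emph{payment system}, $\mathbf{b}+\mathbf{f}$, because the $\mathbf{g}$-contributions cancel between the residual obligation network and the residual external cashflow. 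I would add one sentence making this distinction explicit and then present the $\mathbf{g}=\mathbf{0}$ computation as the proof proper.
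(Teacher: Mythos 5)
Your proposal is correct and follows essentially the same route as the paper: both invoke the flow-conservation property of the balanced subsystem (Corollary~\ref{balanced payment follows flow conservation}) to conclude that removal decreases each firm's credit and debt by the same amount, so the net position survives subtraction. The paper packages the common in/out flow at node $i$ as a clearing vector $\mathbf{p}$ and writes $b_i = (c_i - p_i) - (d_i - p_i)$, which is exactly your row-sum/column-sum computation with $p_i = \sum_j B_{ij} = \sum_j B_{ji}$; your extra remarks on the admissibility of $R = L - B$ and on the scope of ``balanced'' are sensible clarifications but not a different argument.
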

\begin{proof}
    We will follow the similar path as in proof of theorem \ref{substracting cycle}. We have a nominal liabilities matrix $L$ with a net positions vector $\mathbf{b}$, credit vector $\mathbf{c}$ and debt vector $\mathbf{D}$. The balanced payment subsystem $(T,\mathbf{0})$ satisfies the flow conservation constraint as shown in corollary \ref{balanced payment follows flow conservation}. That means the cashflow into each node of the balanced payment subsystem equals the outflow. We can define a clearing vector $\mathbf{p}$ such that $\mathbf{p}_i$ stands for the flow into or flow out of node $i$. Now we subtract the balanced payment subsystem.
    \begin{align}
        L - T = M
    \end{align}
    and show that net positions vector of resulting nominal liabilities matrix $M$ equals the net position vector of nominal liabilities matrix $L$.
    \begin{align}
        b_i = (c_i - p_i) - (d_i - p_i) = c_i - d_i = b_i \qquad \forall i \in \mathcal{N}
    \end{align}
\end{proof}

The method to discharge the maximum amount of obligations without using any liquidity can be summarized in these steps:
\begin{enumerate}
    \item Collect obligations to form an obligation network $\mathcal{G}$.
    \item Form a nominal liabilities matrix $L$ and a payment system $(L,\mathbf{0})$ without external financing.
    \item Find maximum-weight balanced payment subsystem $T$.
    \item Discharge the obligations in a balanced payment subsystem $(T,\mathbf{0})$ by sending set-off notices to all pertaining firms.
    \item Substract the balanced payment subsystem $L - T = M$.
    \item Leave the remaining obligations in the nominal obligations matrix $M$ to discharge using normal bank payment system.
\end{enumerate}

The methodology and its use are described in \cite{ScharaBric}. It is marketed under name Tetris Core Technologies (TETRIS) and has been used in Slovenia since 1991 in support of the trade credit market. We can call matrix $T$ a \emph{TETRIS solution}. Depending on the economic conditions TETRIS discharges between 1\% and 5\% of GDP per year in saved liquidity towards the clearing of trade credit obligations. This is an example of an LSM in the trade credit market with a significant contribution to national financial stability.

\section{Multiple Cashflow Sources}\label{various-sources}
The cashflow vector $\mathbf{f}$ can be a sum of several different vectors. Fig.\ \ref{fig:paysysfirmodraft} shows the payment system with structured sources of cashflow.

\begin{figure}[ht]
    \centering  
    \includegraphics[height=8cm]{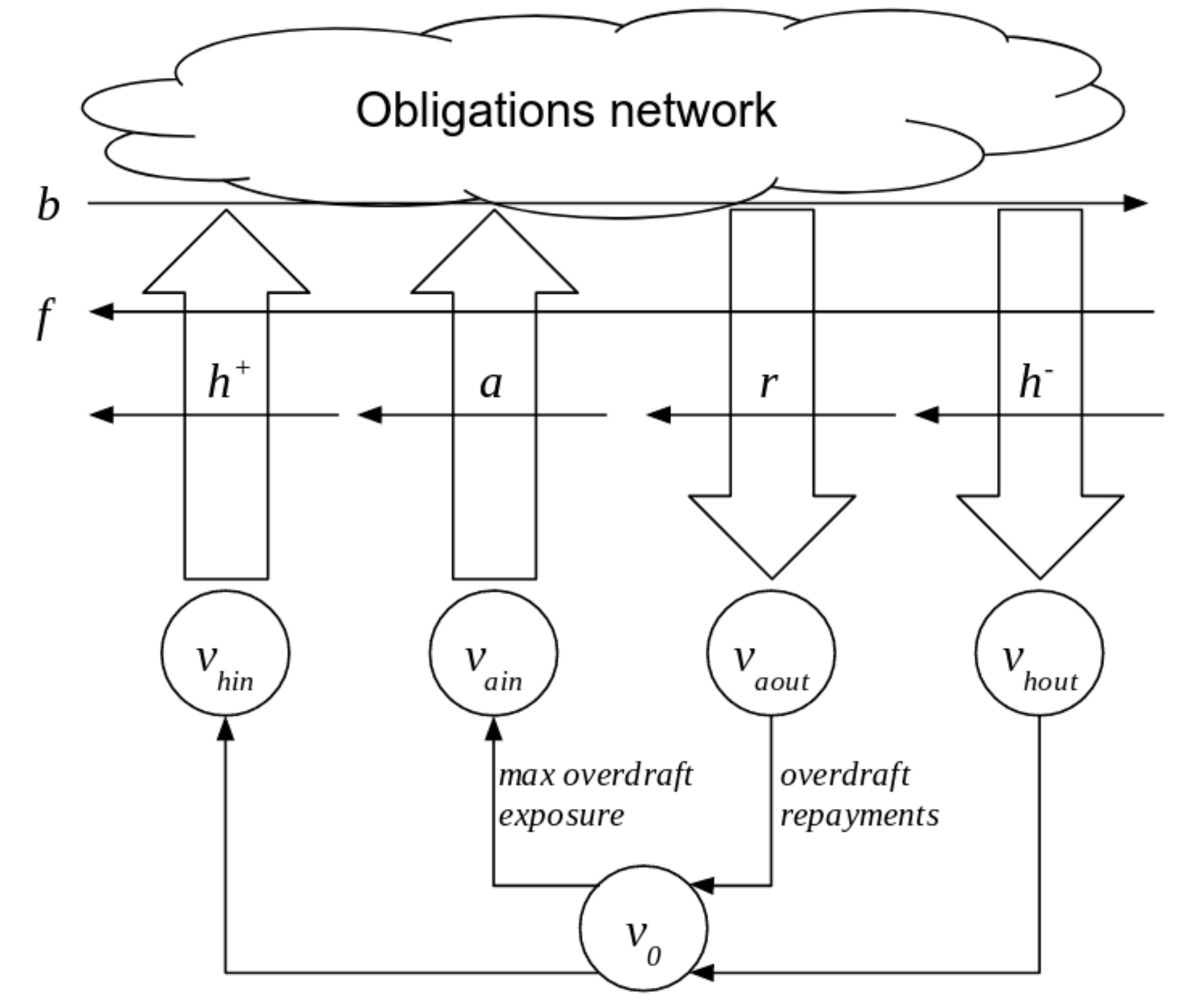}
    \caption{\small\textbf{A payment system with firms accounts and an overdraft facility}}
\label{fig:paysysfirmodraft}
\end{figure}

The first source is the account holding balance for each firm in the obligation network. This is the \emph{account holdings vector} $\mathbf{h}^+$ describing the available cashflow from node $v_{hin}$. The bank offers firms an overdraft facility. The maximum available overdraft for each firm is described by the \emph{available credit line vector} $\mathbf{a}$. The source of overdraft cashflow is node $v_{ain}$. The maximum exposure for the overdraft facility $a^{max} \in \mathbb{R}$ is set by the capacity of the edge connecting $v_0$ with $v_{ain}$. Node $v_{aout}$ is set to define the overdraft facility repayments cashflow. The \emph{repayments of credit limits vector} $\mathbf{r}$ represents the current overdraft taken by individual firms. This is the amount that has to be repaid to balance the individual firm's bank account. Node $v_{hout}$ is set to take the remaining cashflow $\mathbf{h}^-$ out of the obligation network back to accounts-holding node $v_0$. Let $\mathbf{a}^A$ be the approved maximum overdraft for firms. Then $\mathbf{a} = \mathbf{a}^A - \mathbf{r}$. When the distribution of the available external liquidity sources exceeds the values of vector $\mathbf{b}⁻$ at all points, a balanced payment system can be formed. When those conditions are met we can write:
\begin{equation}
    \begin{split}
        \mathbf{b}^- &\leq \mathbf{h}^+ + \mathbf{a} \\
        \left \| \mathbf{a} \right \| &\leq a^{max}  
    \end{split}
\end{equation}
Therefore there are enough external liquidity sourced in firms account holdings and overdraft facility to form an external financing vector $\mathbf{f}$ that satisfies the condition for a balanced payment system $(L,\mathbf{f})$
\begin{align}
    \mathbf{f} = -\mathbf{b}
\end{align}
If the conditions for a balanced payment system are not met, it makes sense to find a balanced payment subsystem to facilitate the discharge of as many obligations as possible.

\section{Optimizing the Use of Available Liquidity}
The optimal solution for a payment system described in Sec.\ \ref{various-sources} can be obtained by applying the idea from Theorem \ref{Tetris-theorem} to the payment system in its entirety. This ensures the maximum total obligation settlement amount with the available liquidity sources. Leaving the execution of payments to the discretion of the individual independent firms will yield a sub-optimal solution since firms do not have sufficient information about the payment system. To use the idea of removing the maximum weight of cycles from the payment system we have to transform the payment system into an extended nominal liabilities matrix, where liquidity sources become new nodes and the desired cashflows become the new liabilities. So we have additional nodes $v_0, v_{hin}, v_{hout}, v_{ain}$ and $v_{aout}$. Node $v_0$ is a debtor to nodes $v_{hin}$ and $v_{ain}$. This represents the total cashflow into the obligation network. The amount for the edge $(v_0,v_{ain})$ is set to the maximum overdraft facility exposure $a_{max}$. The amount for the edge $(v_0,v_{hin})$ is set to maximum potential cashflow into obligation network $\left \| \mathbf{b}^- \right \|$. Node $v_0$ is also a creditor to nodes $v_{hout}$ and $v_{aout}$. This represents the total cashflow out of the obligation network. The amount for the edge $(v_{hout}, v_0)$ is set to the maximum cashflow out of the network $\left \| \mathbf{b}^+ \right \|$. The amount for the edge $(v_{aout},v_0)$ is the current overdraft facility exposure $\left \| \mathbf{r} \right \|$.

Node $v_{hin}$ represents all the available cash in the individual firms’ accounts. For every firm $i$ with a positive cash balance, there is a connection between nodes $v_{hin}$ and $v_i$. The amount for this connection is the cash balance available. This setup ensures that the cashflow from the firms' bank accounts will never exceed the available cash in the individual accounts.

Similarly, we set up the available pre-approved bank account overdrafts. For firm $i$ with an overdraft approved there is a connection between nodes $v_{ain}$ and $v_i$. The amount for this connection is the amount of overdraft approved, $\mathbf{a}^A$, minus the overdraft already taken, $\mathbf{r}$. 
The repayments of the overdraft facility are set with the connection of nodes representing firms with overdraft taken to the node $v_{aout}$. The amounts for these connections are the individual firms’ overdrafts.

Let us denote such an extended nominal liability matrix by $L^*$. Let us apply the MCF algorithm to find an extended minimum-cost flow $M^*$ and then the extended maximum weight set of cycles $T^*$ by using the equation
\begin{equation}
    L^* - M^* = T^*
\end{equation}
\begin{theorem}
    The extended maximum weight set of cycles $T^*$ discharges the maximum amount of obligations in the obligation network with the available liquidity.
\end{theorem}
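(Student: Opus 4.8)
The plan is to reduce the claim to Theorem \ref{Tetris-theorem} and its corollary, applied verbatim to the extended matrix $L^*$ in place of $L$. First I would record that $L^*$ is an ordinary nominal liabilities matrix over the enlarged node set $\{v_0,v_{hin},v_{hout},v_{ain},v_{aout}\}\cup V$, so it has its own net position vector $\mathbf{b}^*$, and by Theorem \ref{BalacedPS} the pair $(L^*,-\mathbf{b}^*)$ is a balanced payment system. Running the MCF algorithm of Problem \ref{MCF problem} on $L^*$ — source attached to the nodes where $\mathbf{b}^*$ is negative, sink to the nodes where it is positive, flow value $\|(\mathbf{b}^*)^-\|$ — yields $M^*$ whose net position vector is again $\mathbf{b}^*$ (each source/sink edge is saturated because the capacities leaving the source sum to exactly $\|(\mathbf{b}^*)^-\|$, exactly as in the proof of Theorem \ref{Tetris-theorem}). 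Hence $T^*=L^*-M^*$ has net position vector $\mathbf{b}^*-\mathbf{b}^*=\mathbf{0}$, so it satisfies flow conservation at every node and is a balanced payment subsystem; this is precisely the computation in the proof of Theorem \ref{Tetris-theorem}, now carried out on $L^*$.

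Next I would establish optimality. Since $\mu(T^*)=\mu(L^*)-\mu(M^*)$ with $\mu(L^*)$ fixed, minimising $\mu(M^*)$ — what the MCF algorithm does — is the same as maximising $\mu(T^*)$ over all balanced payment subsystems $T'$ with $0\le T'\le L^*$ entrywise, which is the corollary to Theorem \ref{Tetris-theorem}. It then remains to argue that these balanced payment subsystems of $L^*$ are exactly the feasible discharge schedules for the payment system of Sec.\ \ref{various-sources}: the constraint $0\le M^*_{ij}\le L^*_{ij}$ on the auxiliary edges encodes "use no more cash than a firm holds" and "draw no more than the approved overdraft, repay no more than is owed", the capacity on $(v_0,v_{ain})$ is the cap $a^{max}$, and flow conservation at a firm node $v_i$ is the requirement that $v_i$'s payments plus repayments are covered by its receipts plus the cash and overdraft it is allowed to move. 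So a maximum-weight balanced payment subsystem of $L^*$ is a discharge that settles the largest possible total flow subject to the available liquidity.

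The main obstacle is that the quantity named in the theorem is the amount of \emph{obligations} discharged, i.e.\ the weight $T^*$ carries on the original edges $E$ of $\mathcal{G}$, whereas $\mu(T^*)$ also counts the flow $T^*$ places on the auxiliary edges $(v_0,v_{hin})$, $(v_{hin},v_i)$, $(v_i,v_{hout})$, and so on — the cash actually moved. To close this gap I would decompose $T^*$ into elementary cycles: each lies either inside $\mathcal{G}$, in which case all of its weight is obligation weight, or passes through $v_0$, in which case it incurs a fixed auxiliary "overhead" per unit of cash routed while traversing a chain of obligation edges. Choosing the MCF costs to be $1$ on the edges of $E$ and $0$ on the auxiliary edges makes the algorithm minimise exactly the obligation weight remaining in $M^*$, hence maximise the obligation weight in $T^*=L^*-M^*$; the uniform unit-cost version used earlier for $L$ is the special case with no auxiliary edges present. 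With that cost assignment the maximisation of $\mu(T^*)$ and the maximisation of the obligation part of $T^*$ coincide, and the theorem follows. I would flag the verification that this cost choice leaves the rest of the argument intact — in particular that $M^*$ still has net position vector $\mathbf{b}^*$ and $T^*$ is still balanced — as the one place where genuine care is needed.
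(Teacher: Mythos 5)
Your plan is sound and is in fact considerably more careful than the paper's own proof, which consists of three informal sentences (``the extended matrix has no external liquidity sources; new sources would be needed to discharge $M^*$; therefore the cycles in $T^*$ used all available liquidity to discharge the maximum amount'') and asserts optimality rather than deriving it. Your first two steps --- rerunning the computation of Theorem \ref{Tetris-theorem} and its corollary verbatim on $L^*$, and then arguing that balanced subsystems of $L^*$ correspond to feasible discharge schedules of the Section \ref{various-sources} payment system --- supply exactly the reduction the paper appears to have in mind but never writes down. More importantly, the obstacle you flag in your third step is a genuine gap in the paper that its proof does not confront: the grandsum $\mu(T^*)$ counts the flow placed on the auxiliary cash edges $(v_0,v_{hin})$, $(v_{hin},v_i)$, etc.\ alongside genuine obligations, so with the uniform unit costs the paper prescribes, maximising $\mu(T^*)$ is not literally the same as maximising the amount of \emph{obligations} discharged --- a cycle routed through $v_0$ earns spurious credit for its auxiliary overhead. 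Your repair (cost $1$ on the edges of $E$, cost $0$ on auxiliary edges) is the right one, and the verification you flag is harmless: the constraints of Problem \ref{MCF problem} that force $M^*$ to inherit the net position vector of $L^*$, and hence force $T^*=L^*-M^*$ to be balanced, involve only the capacities and the flow-conservation equations, not the cost coefficients, so changing the objective leaves the balancing argument intact. In short, your route reaches the stated conclusion by a more rigorous path than the paper, and along the way exposes (and fixes) an imprecision in the paper's formulation rather than containing a gap of its own.
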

\begin{proof}
    The extended nominal liability matrix has no external liquidity sources. Therefore, new sources of liquidity are needed to discharge the obligations in $M^*$. Therefore, the cycles in $T^*$ used all available liquidity inside $L^*$ to discharge the maximum amount of obligations in the obligation network.
\end{proof}

\section{Conclusions}
The generalization of the payment system presented allows for the implementation of an LSM outside the interbank payment systems. The potential in trade credit markets is proven by 30 years of positive experience with trade credit clearing in Slovenia. New developments in e-invoicing and tax compliance create new opportunities to implement LSMs in the trade credit market.

There are new services developing that collect a huge amount of trade credit information that can be utilized to implement the idea of the balanced payment subsystem. New methods of information exchange using decentralized ledger technologies (DLT) call for the implementation of LSMs that provide solutions in environments where liquidity is not readily available.

Further opportunities for implementation are with complementary currencies that can enhance the implementation of an LSM through the use of the available mutual credit. It is also possible to design a payment system that discharges obligations issued in fiat currency by using a complementary currency as a source of liquidity. This way the benefits of mutual trust that are characteristic of complementary currency communities are transmitted to the wider society.

\bibliographystyle{unsrt}




\end{document}